\documentclass[preprint]{elsarticle}
\usepackage[english]{babel}
\usepackage[letterpaper,top=2cm,bottom=2cm,left=3cm,right=3cm,marginparwidth=1.75cm]{geometry}
\usepackage{amsmath, amsfonts,amssymb,amsthm}
\usepackage{graphicx}
\newtheorem{proposition}{Proposition}
\usepackage{indentfirst}
\usepackage{subcaption}
\usepackage{makecell}
\usepackage{xcolor}
\usepackage{tikz}
\usepackage{pgfplots}
\usepackage{multirow}
\pgfplotsset{compat=1.18}
\usepackage[
colorlinks=true,%
breaklinks=true,%
linkcolor=blue,
urlcolor=blue,%
citecolor=blue,%
pdftitle={Uncertainty quantification of fatigue initiation life for powder bed fusion metal additive manufacturing}, 
pdfkeywords={},	
pdfauthor={Yulin Guo},		
bookmarksopen=false,
pdfpagemode=UseNone]{hyperref}

\usepackage[textsize=tiny]{todonotes}

\newcommand{\bit}{\begin{itemize}}
\newcommand{\eit}{\end{itemize}}
\newcommand{\ben}{\begin{enumerate}}
\newcommand{\een}{\end{enumerate}}

\newcommand {\real} {\mathbb{R}}

\newcommand{\bC}{\ensuremath{\mathbf{C}}}
\newcommand{\bD}{\ensuremath{\mathbf{D}}}
\newcommand{\bE}{\ensuremath{\mathbf{E}}}
\newcommand{\bF}{\ensuremath{\mathbf{F}}}

\newcommand{\bI}{\ensuremath{\mathbf{I}}}

\newcommand{\bK}{\ensuremath{\mathbf{K}}}
\newcommand{\bL}{\ensuremath{\mathbf{L}}}

\newcommand{\bP}{\ensuremath{\mathbf{P}}}

\newcommand{\bR}{\ensuremath{\mathbf{R}}}
\newcommand{\bS}{\ensuremath{\mathbf{S}}}
\newcommand{\bT}{\ensuremath{\mathbf{T}}}

\newcommand{\bb}{\ensuremath{\mathbf{b}}}

\newcommand{\bm}{\ensuremath{\mathbf{m}}}
\newcommand{\bn}{\ensuremath{\mathbf{n}}}

\newcommand{\bp}{\ensuremath{\mathbf{p}}}

\newcommand{\br}{\ensuremath{\mathbf{r}}}

\newcommand{\bu}{\ensuremath{\mathbf{u}}}

\newcommand{\bz}{\ensuremath{\mathbf{z}}}

\newcommand {\blambda} {\mbox{\boldmath $\lambda$}}

\newcommand{\bsigma}{\ensuremath{\boldsymbol{\sigma}}}

\newcommand{\cC}{\ensuremath{\mathcal{C}}}

\makeatletter
\def\ps@pprintTitle{%
  \let\@oddhead\@empty
  \let\@evenhead\@empty
  \let\@oddfoot\@empty
  \let\@evenfoot\@oddfoot}
\makeatother

\begin{document}

\begin{frontmatter}
		
\title{Uncertainty quantification of fatigue initiation life for powder bed fusion metal additive manufacturing}

\author[ucsd]{{Yulin Guo}\corref{cor1}}
\ead{yug054@ucsd.edu}

\author[umich]{Veera Sundararaghavan}
\ead{veeras@umich.edu}
		
\author[ucsd]{Boris Kramer}
\ead{bmkramer@ucsd.edu}

\cortext[cor1]{Corresponding author}
                
\address[ucsd]{Department of Mechanical and Aerospace Engineering, University of California San Diego, CA, United States}

\address[umich]{Department of Aerospace Engineering, University of Michigan Ann Arbor, MI, United States}

\begin{abstract}
    Predicting fatigue life with quantified uncertainties is essential for the qualification of critical components produced by laser-based powder bed fusion additive manufacturing. We present a framework that propagates microstructure and defect uncertainties directly to a fatigue initiation life distribution for a specific part. In particular, microstructure and defect characterizations are obtained from electron backscatter diffraction and micro-computed tomography scan data, which in turn inform three physics-based simulations yielding the fatigue-affecting quantities: the elastic energy release rate, the surface energy along the crack path, and the fatigue indicator parameter. Accounting for the uncertainties in these quantities and the high correlations among them due to the shared underlying microstructure, we derive a closed-form probability density function for the fatigue initiation life. This provides an analytical distribution instead of conservative deterministic predictions and enables more informed decision making for the qualification and deployment of additively manufactured components. Applying the framework to 316L stainless steel parts produced by an EOS M290 laser powder bed fusion machine, we find that both grain sizes and void distributions influence the fatigue initiation life distribution. Specifically, for a fixed total void volume fraction, larger grain sizes cause a marginal reduction in fatigue initiation life, and a population of many small voids is more favorable for fatigue life than fewer, larger voids of equivalent total volume.
\end{abstract}

\end{frontmatter}

\section{Introduction}\label{section:intro}


Additive manufacturing (AM), the process of joining materials layer by layer to make parts from 3D model data, has been a popular prototyping tool~\cite{gardan2017additive}. The American Society for Testing and Materials classifies AM processes into seven categories: binder jetting, directed energy deposition, material extrusion, material jetting, powder bed fusion, sheet lamination, and vat photopolymerization~\cite{cdi_astm_standards_111828}. A variety of materials, including metals, are available to provide the unique attributes needed to satisfy the engineering requirements of AM parts in aerospace, automotive, medical devices, and other industries~\cite{shapiro2016additive, vasco2021additive, seoane2021semi, murr2018additive, wang2024uncertainty}. Several additive manufacturing processes are available for metallic materials, with each process utilizing specific material forms: metal sheets for sheet lamination, metal rods or wires for material extrusion, metal wires for directed energy deposition, and metal powders for binder jetting and powder bed fusion~\cite{konda2017additive}. Selective laser melting is a powder bed fusion technique and is typically used for metallic powders. It operates with a cold powder bed inside a closed chamber filled with inert gases such as N$_2$ or Ar, which prevent oxidation of the metal powder during melting~\cite{konda2017additive}. 

Compared to traditional subtractive and formative manufacturing methodologies, such as milling and injection molding, AM has the ability to make highly complex geometric designs. Combined with its high speed and low labor cost, AM has gained popularity beyond modeling, prototyping, and tooling and has been used to produce complex, small-batch parts that are impractical to manufacture using conventional methods~\cite{guo2025risk}. The challenge in additive manufacturing stems from inherent uncertainties in both material properties and process conditions. Compared to traditional manufacturing, these variations are harder to control and can lead to inconsistent part quality. As a result, the final component may contain defects introduced during printing, which increase variability in performance and reliability.


The selective laser melting process for metal involves over 50 parameters, including laser power, scanning speed, hatch distance, and overlaps~\cite{spears2016process, konda2017additive}. Additively manufactured parts can exhibit unique microstructures due to thermal gradients, including porosity, grain features, and melt pool boundaries between deposit layers and scanning tracks~\cite{ronneberg2020revealing}. Different machines and process parameters can produce parts with grain lengths ranging from 10 to 500 $\mu$m. In thicker parts, microstructures with columnar orientations along the build direction can be observed, which promote twinning-induced plasticity~\cite{wang2018microstructure}. At the grain level, cellular subgrain structures are oriented along the grain solidification direction with widths ranging from 0.3 to 1.0 $\mu$m due to variations in primary dendrite arm spacing between adjacent grains~\cite{afkhami2021effects}. Subgrain sizes are also affected by process parameters. For example, a reduced hatch distance can lead to thicker cellular structures, and a faster cooling rate can lead to smaller primary dendrite arm spacing and higher strength \cite{eliasu2021effect} based on the Hall--Petch relationship \cite{hall1951deformation, petch1958ductile, hansen2004hall}. Grain features can be observed by optical microscopy and scanning electron microscopy~\cite{ronneberg2020revealing}. Grain size measurements depend on both the measurement plane and the measurement method~\cite{riabov2021investigation}. Electron backscatter diffraction (EBSD) inverse pole figure maps can be obtained for different process parameter sets using various machines, such as 3D Systems DMP 350, EOS M290, AddUp FormUp 350, and Renishaw AM 400~\cite{schreiber2025rationalizing}.


Under cyclic loading conditions, cracks in additively manufactured metallic parts may initiate from preexisting voids and propagate, leading to fatigue failures. The ratio of fatigue limit to tensile strength has been reported as 0.12--0.24 for 316L stainless steel manufactured by the laser-based powder bed fusion technique, compared to 0.40--0.60 for wrought counterparts~\cite{hamada2023enhancement}. This indicates reduced fatigue performance in 316L stainless steel parts produced by the powder bed fusion technique~\cite{afkhami2021effects}. The loading direction relative to a defect's major axis plays a key role in fatigue life \cite{wang2018microstructure}. In the presence of lack-of-fusion defects, horizontal samples tend to last longer under fatigue loading conditions. The higher ductility of columnar texture may also contribute to longer fatigue life due to improved energy absorption, and the effects of loading direction with respect to lack-of-fusion defects are discussed in~\cite{richardsen2023effect}. In principle, while the fatigue performance of metal AM parts may be improved by minimizing defect formation during~printing, the as-printed microstructure, including metallic grains and nonmetallic inclusions, is the primary driver of fatigue initiation life. Therefore, a quantitative understanding of its influence is essential for reliable fatigue life prediction.
    

Recent efforts to predict the fatigue life of printed metal parts often combine simulations and experimental testing. When considering uncertainties in quantities that affect fatigue life, such as the size of nonmetallic inclusions in steel parts, the common practice is to fit extreme value distributions to limited measurements~\cite{astm_standards_2283}. The authors in~\cite{ponticelli2022experimental} consider different build orientations in 316L stainless steel and conduct reverse bending fatigue tests. They use scanning electron micrographs to analyze fracture surfaces and relate the results to volumetric energy density. The work~\cite{aiza2025effects} considers different build orientations (11 setups between 0$^{\circ}$ and 90$^{\circ}$) of Ti6Al4V, measures mechanical properties (hardness, tensile strength, fatigue strength, and fracture toughness) as well as wear and corrosion properties, and uses micro-computed tomography to characterize pores and EBSD to examine post-heat-treatment microstructures. The authors in~\cite{douglas2024influence} consider different sample shapes, surface finishes, and build directions for 316L stainless steel AM parts to investigate low-cycle fatigue behavior. They use optical microscopy to identify defects and EBSD to characterize the microstructure. The above works focus on a fixed set of build conditions and rely on costly experiments to qualitatively relate fatigue life to process parameters. There are also several works on neural networks to predict fatigue behaviors. The authors in~\cite{tognan2024bayesian} use a defect-based Bayesian neural network to evaluate the fatigue endurance of selective laser melted AlSi10Mg parts. However, the network only predicts whether a specimen attains a given fatigue life subjected to a prescribed tensile fatigue load and only a limited number of samples with certain defect traits are used for training and testing. The authors in~\cite{wang2025multi} use a multi-fidelity neural network with defects as constraints to predict the fatigue performance of Ti6Al4V parts, which are produced with a few sets of process parameters and tested under a few conditions. These approaches lack the ability to reliably predict the fatigue life of parts built with untested printing parameter sets. Another approach for fatigue life prediction of additively manufactured parts is multi-physics modeling~\cite{yan2018integrated}. In this framework, the authors combines models for the manufacturing process, material structure formation, and mechanical response. However, the predicted fatigue life is calculated empirically with a nominally defect-free baseline and the uncertainty associated with the prediction is not explicitly quantified.

 
We develop methods to quantify the uncertainty in fatigue initiation life for as-built parts by incorporating three fatigue-affecting quantities, namely the elastic energy release rate, the surface energy along the crack path, and the fatigue indicator parameter, along with their associated uncertainties. These quantities are based on the microstructure, including both the metallic grains and the defects, of 316L stainless steel parts produced by the laser-based powder bed fusion technique. We account for correlations among these three quantities, as they are derived from the same microstructure data and finite element mesh. The resulting fatigue initiation life is not merely a point estimate but a full probability distribution, which provides actionable guidance on how the part can be safely utilized.
    
The novel contributions of this work are: (1) We establish a forward uncertainty propagation framework to quantify the uncertainty in the fatigue initiation life of additively manufactured parts directly from as-printed microstructure features, including both the metallic grains and defects. (2) We combine the microstructure data of the printed parts, first-principles-based physical model simulations, and statistical model selection to build the predictive framework. (3) To reduce the time to prediction, we accelerate the linear elastic finite element simulation through a two-fidelity scheme with a lower-fidelity preconditioner and warm-start initial guess. The simulation results are used to calculate the elastic energy release rate, one of the three fatigue-affecting quantities. (4) By incorporating spherical voids in metallic polycrystalline grains and considering proper material boundary interactions, we extend the 3D graph-theoretical energy-minimization model, MicroFract3D, to predict the crack path and obtain the crack-path energy based on local resistance analysis. (5) With a block maxima approach and physical models under a shared microstructure, we derive a closed-form analytical probability density function for fatigue initiation life under correlated variables. By contrast, the traditional approach assumes a generic distribution such as a Gumbel or Weibull distribution for the governing~variables.

The rest of this paper is organized as follows. We first introduce the physics that governs fatigue behavior, the fatigue-affecting quantities, and the calculation of fatigue initiation life in Section~\ref{sec: fatigue-life}. We then discuss the simulations for obtaining these fatigue-affecting quantities in Section~\ref{sec: simulation}. In Section~\ref{section:uncertainty-initiation-life}, we detail the uncertainty in fatigue initiation life based on its functional form. Specifically, we derive a closed-form probability density function for the fatigue initiation life, considering the correlations among and uncertainties in the fatigue-affecting quantities. In Section~\ref{sec: num}, we provide numerical results. Based on the microstructure, including metallic grains and voids, of 316L stainless steel AM parts, we show the propagation of uncertainties from fatigue-affecting quantities to the fatigue initiation life. Finally, we conclude our work and identify future research directions in Section~\ref{sec: con}.

\section{Fatigue life modeling}\label{sec: fatigue-life}

The fatigue life of an additively manufactured metallic part under cyclic loading is governed by the initiation and early growth of cracks, which are influenced by complex microstructure-driven plasticity. To predict a specific part's fatigue life, we use three variables: the surface energy $\gamma$, the elastic energy release rate $G$, and the plasticity related fatigue indicator parameter (FIP) - the stored plastic work $\omega^p$ \cite{fine2007model}. We first introduce the physical background and models associated with initiation life in Sections~\ref{section:fatigue-life-cyclic-loading} and \ref{section:three-models}, followed by the formulation of the physical principles connecting these variables to fatigue crack initiation. As detailed below, we use J/m$^2$ for all three quantities.

\subsection{Fatigue life under cyclic loading}\label{section:fatigue-life-cyclic-loading}

Under cyclic loading, metal components experience progressive damage that can lead to fatigue failure. This process can be divided into two stages: crack initiation and crack propagation. In the first stage, fatigue damage accumulates at the microstructural level. This leads to the nucleation of micro cracks from small voids, which subsequently coalesce into macro cracks~\cite{griffith1921vi}. In the second stage, macro cracks propagate through the bulk material under cyclic loading. Crack propagation is governed by Paris-type laws~\cite{paris1963critical}. When the component can no longer sustain the load, sudden failure occurs. In high cycle fatigue of additively manufactured 316L stainless steel with reduced surface roughness, the crack initiation phase dominates the total fatigue life, accounting for 70\%-90\% of the total life \cite{avanzini2022fatigue}. These fatigue cracks nucleate from pre-existing printing defects, such as trapped pores, lack-of-fusion pores, and keyhole defects due to local overheating. 

Crack initiation corresponds to the separation of neighboring atoms to form two new free surfaces, which requires work equal to the atomic bond energy. For a crack of length $a$, we denote the thickness of the part at the crack site as $B$. Microscopically, the crack initiation creates two new rectangular surfaces with length $a$ and width $B$. Thus the total atomic bond-breaking energy is $E_{\text{bond}} = 2\gamma_s a B$, where $\gamma_s$ is the bond-breaking energy per unit area. For brittle materials, it was shown in~\cite{griffith1921vi} that a crack will advance when the released elastic strain energy equals the surface energy of the newly created crack surfaces. In metallic materials, however, plastic deformation occurs near the crack tip. This dissipates more energy than the cleavage of atomic bonds alone. This extension~\cite{orowan1949fracture,irwin1957analysis} defines an effective surface energy $\gamma$ that incorporates both the cleavage energy and the plastic work of deformation. Under cyclic loading, this local plastic work accumulates in the form of dislocation structures within the grain microstructure. A fatigue crack initiates when the sum of the released elastic strain energy and the accumulated stored plastic strain energy of dislocations overcomes the local effective surface energy barrier~\cite{tanaka1981dislocation,zhao2014energy}.

\subsection{The three models associated with initiation life estimation}\label{section:three-models}

We are interested in the uncertainty in the fatigue initiation life of an additively manufactured 316L stainless steel part produced using the selective laser melting technique. As described in Section~\ref{section:intro}, the metallic powder melts and solidifies to form a part. The thermal-mechanical behavior during the printing process results in a complex microstructure in the part. To quantify the uncertainty in its fatigue initiation life, we use three models to obtain fatigue-affecting quantities and propagate their uncertainties, as shown in Figure~\ref{fig:three-ingridients}.

\begin{figure}[h]
    \centering
    \input{fig-three-ingredients}
    \caption{We use three models to find the fatigue-affecting quantities $G$, $\gamma$, and $\omega^p$ and propagate their uncertainties to the fatigue initiation life.}
    \label{fig:three-ingridients}
\end{figure}

The three models and their corresponding outputs are as follows: (1) We compute the elastic energy release rate $G$ for spherical voids using a finite element model combined with an analytical solution from linear elastic fracture mechanics. (2) We find the minimum-energy crack path to obtain the surface energy $\gamma$ along the crack surfaces using a graph-based methodology. Fatigue crack paths are significantly influenced by the microstructure. For instance, microstructures with columnar grains aligned parallel to the loading direction can exhibit enhanced fatigue strength compared to those with equiaxed grains, due to the directional resistance to crack propagation. Therefore, we use microstructure models of metallic grains and voids in the simulation. (3) At the microstructure level, we use crystal plasticity finite element simulations to evaluate the accumulation of plastic work near the flaws. Based on the simulation results, we calculate the stored plastic work $\omega^p$ to quantify the cyclic stored plastic work, which correlates with the material's fatigue resistance.

Next, we derive the fatigue initiation life as a function of these three quantities. In the initiation event, we assume that a pre-existing crack begins to extend, which results in an incremental increase in the crack area, denoted as $A$. Crack formation leads to the release of pent-up elastic energy from the applied load. We define the elastic energy release rate $G$ in terms of the stored elastic energy $U^e$ 
    \begin{equation}
        G := -\frac{\mathrm{d} U^e}{\mathrm{d} A}.
        \label{eq:G_energy_release_rate_U}
    \end{equation}
In Mode-I loading, $G = K_{\text{max}}^2/E'$, where $K_{\text{max}}$ is the maximum stress intensity factor and $E'$ is the effective modulus. For plane strain, $E' = E/(1-\nu^2)$ where $\nu$ is Poisson's ratio. We compute $K_{\text{max}}$ based on the spherical defect diameter and the elastic stress field from the linear elastic finite element~simulations.

Crack formation releases the stored plastic work in the form of dislocations near the crack nuclei. The energy from loading is stored continuously by dislocations over the number of fatigue loading cycles. The decrease in the stored energy due to the increase in crack area is given by
    \begin{equation}
        \frac{\mathrm{d} \omega^p}{\mathrm{d} A} =: - N\omega^p.
        \label{eq:omega_p_N_W}
    \end{equation}
The negative sign denotes that this energy is released when the crack nucleates. The FIP $\omega^p$ can be viewed as the energy stored per cycle per unit area, which is obtained by applying a full loading cycle to the microstructure. In the crystal plasticity finite element simulation, we calculate the stored plastic work by integrating the plastic work rate over the entire loading cycle on a representative~microstructure.

The energy barrier to cracking is related to the increase in energy due to the formation of new surfaces, quantified per unit area as $\gamma$. We can write the change in surface energy with respect to crack area as:
    \begin{equation}
        \frac{\mathrm{d}W^s}{\mathrm{d}A} =: 2\gamma,
        \label{eq:gamma_surface_energy}
    \end{equation}
where the factor 2 accounts for the two surfaces created by the crack. Surface energy is material- and microstructure-dependent because it is related to the impurities and the strength of internal surfaces, such as slip planes and grain boundaries. We model these quantities in the graph-based crack path simulation to calculate the surface energy.

A crack initiates when the following stationarity~\cite{fine2007model} is achieved, as modeled by
    \begin{equation}
        \frac{\mathrm{d}(U^e+W^p+W^s)}{\mathrm{d}A}=0.
        \label{eq:crack_initiation_stationarity}
    \end{equation}
We obtain the number of cycles to crack initiation, $N$, by substituting equations~\eqref{eq:G_energy_release_rate_U}~to~\eqref{eq:gamma_surface_energy} into eq.~\eqref{eq:crack_initiation_stationarity}:
    \begin{equation}
        N = \frac{2\gamma-G}{\omega^p}.
        \label{eq:fatigue_life}
    \end{equation}

Next, we discuss the three models in detail in Section~\ref{sec: simulation}. In most cases, the parameters $\gamma$, $G$, and $\omega^p$ are uncertain quantities such that $N$ becomes a random variable. In this case, we derive the closed-form solution for the distribution of $N$ in Section~\ref{section:uncertainty-initiation-life}, where we consider the uncertainties in the three quantities and correlations among them.

\section{Simulation details}\label{sec: simulation}

We use eq.~\eqref{eq:fatigue_life} to determine the fatigue initiation life of a 3D-printed 316L stainless steel part using quantities obtained from three simulations. To estimate the initiation life $N$, we require three physical quantities: the elastic energy release rate $G$, the surface energy $\gamma$, and the peak plastic work $\omega^p$. We discuss the computational methodologies used to extract these quantities. Specifically, we present the details of the linear elastic finite element analysis, the graph-theoretic crack path simulation, and the crystal plasticity finite element analysis in Sections~\ref{section:simulation-LEFM-G}, \ref{section:simulation-MicroFract3D}, and \ref{section:simulation-CPFE},~respectively.

\subsection{Elastic energy release rate $G$ by linear elastic finite element analysis}\label{section:simulation-LEFM-G}

We detail the computation of the elastic energy release rate $G$, beginning by outlining the linear elastic voxel-based solver ELAS3D in Section~\ref{section:ELAS3D}. We then describe how we determine the maximum stress intensity factor $K$ and calculate the resulting energy release rate $G$ in Section~\ref{section:SIF-G}. To accelerate the full elastic stress field computation, we introduce a two-resolution scheme and discuss it in Section~\ref{section:two-resolution-scheme}.


\subsubsection{Linear elastic simulation using ELAS3D}\label{section:ELAS3D}

To obtain the linear elastic energy release rate $G$, we first compute the full elastic stress field of the microstructure containing metallic grains and spherical voids. For this purpose, we run linear elastic simulations using a modified version of the ELAS3D code originally described in NISTIR-6269~\cite{garboczi1998finite}. 

Each simulation domain is represented by a 3D digital image of cubic voxels where the voxelized microstructure is generated using Voronoi tessellation. For metallic voxels, we assign a single-crystal stiffness tensor $\mathbf{C} \in \mathbb{R}^{6\times6}$ rotated to the local grain orientation. For void voxels, we assign a near-zero stiffness so that they carry negligible load. To discretize the domain, each voxel is treated as a trilinear finite element with 8 corner nodes. A standard node labeling scheme is implemented to reference the element degrees of freedom consistently across the grid. We select the void stiffness value to ensure numerical stability in simulations, as detailed in Section~\ref{sec: num}. 
    
Once the voxelized microstructure is defined, the corresponding elasticity problem is set up using a variational principle. The solver seeks the nodal displacements $w_{\text{rp}}$ that minimize the total elastic energy $U$ in the entire simulation domain, satisfying $\partial U/\partial w_{\text{rp}} = 0$ for all degrees of freedom, where the subscript $\text{r} \in \{1,\dots,8\}$ labels the node and $\text{p} \in \{x,y,z\}$ labels the translational displacement direction. To express this energy, we first use Voigt notation to store the six independent strain components as
    \begin{equation}
        \boldsymbol{\epsilon}=(\epsilon_{xx},\epsilon_{yy},\epsilon_{zz},\epsilon_{yz},\epsilon_{xz},\epsilon_{xy}).
        \label{eq:Voigt-strain}
    \end{equation}
The voxel small-strain linear elastic energy is thus written as
    \begin{equation}
        U^e_{\text{voxel}} = \frac{1}{2}\int_{\text{voxel}} \epsilon_{\alpha} C_{\alpha\beta} \epsilon_{\beta}\, \mathrm{d}V,
        \label{eq:voxel-elastic-energy}
    \end{equation}
where $\alpha,\beta \in \{1,2,\ldots,6\}$ are Voigt indices ($\epsilon_1 \Leftrightarrow \epsilon_{xx}$, $\epsilon_2 \Leftrightarrow \epsilon_{yy}$, $\ldots$ according to eq.~\eqref{eq:Voigt-strain}), and $C_{\alpha\beta}$ are the components of the single-crystal stiffness tensor $\bC$. We first sum the integrand over the Voigt indices locally, and then sum over all voxels to obtain the total energy $U$ of the simulation domain. 

To discretize the variational formulation, we interpolate the displacement inside each voxel using the voxel shape functions and write each component of $\bu(x,y,z)$ as $u_{\text{p}}(x,y,z)=N_{\text{r}}(x,y,z)\, w_{\text{rp}}$, where $N_{\text{r}}(x,y,z)$ is the trilinear shape function that maps the 24 displacement degrees of freedom to the displacement component $u_{\text{p}}$ at an interior point $(x,y,z)$ within the voxel. The local strain field is then computed from displacement gradients by applying a Voigt-form derivative operator to the interpolated displacement field. This allows us to express the local strain as a linear map of nodal displacements, $\epsilon_{\alpha}(x,y,z)=S_{\alpha \text{rp}}(x,y,z)\, w_{\text{rp}}$, where $S_{\alpha \text{rp}}$ is the strain-displacement operator that takes the spatial derivatives of the shape functions and maps the nodal displacement component $w_{\text{rp}}$ to the local strain component $\epsilon_{\alpha}$ at a point inside the voxel. Using this representation, each voxel stiffness matrix $\bD \in \real^{24\times 24}$ is assembled by integrating the strain-displacement operator against the local moduli, yielding components $D_{\text{rp},\text{sq}} = \int_{\text{voxel}} S_{\alpha \text{rp}}^{\top} C_{\alpha\beta} S_{\beta \text{sq}}\, \mathrm{d}V$. The voxel energy is then written in quadratic form using nodal unknowns, where the first pair of subscripts, $\text{r}$ and $\text{p}$, indexes the component's row and the second pair, $\text{s}$ and $\text{q}$, indexes the column.
    
We evaluate the voxel integrals using Simpson's rule in the implementation. Note that the integrands are at most quadratic. To efficiently manage the degrees of freedom and speed up computation, we implement three strategies in the simulation: first, the 3D grid indices $(i,j,k)$ are mapped to a single global node label $m=(k-1)n_x n_y + (j-1)n_x + i$, where $n_x, n_y, n_z$ are image dimensions. We can then store the microstructure as a 1D array and assemble global contributions efficiently; second, a fixed neighbor labeling is used by storing the 27-node neighborhood in a 1D table to gather the nodes that couple through the surrounding voxels. We also apply periodic wrap-around for neighbors that fall outside the domain to satisfy the periodic boundary conditions; and third, we compute the global energy gradient without storing the full global stiffness matrix explicitly; instead, we rebuild the needed matrix–vector products from local voxel stiffness matrices and neighbor connectivity to reduce memory use.
    
To model the bulk material response, we use periodic boundary conditions for each simulation domain so that each cube behaves as a repeating unit cell under a prescribed macroscopic loading. We apply a uniaxial tensile strain $\epsilon_{zz}$ with all other macroscopic strain components set to zero, which corresponds nominally to Mode-I loading. The macroscopic strain ${\bE}$ is applied through periodic displacement jumps at the unit-cell boundaries, and we incorporate these jumps through an additive correction in the boundary voxel energy contributions. 
    
With these boundary conditions implemented, the global energy functional remains quadratic in the unknown nodal displacements, and we interpret its stationary point as the discrete solution for the displacement field under the prescribed macroscopic strain. The stationary condition $\partial U / \partial w_{\text{rp}} = 0$ assembles into a global linear~system
    \begin{equation}
        \bK \bu = -\bb,
        \label{eq:elas3D-linear-system}
    \end{equation}    
where $\bK \in \real^{n \times n}$ is the global stiffness matrix, $\bu \in \real^{n}$ is the vector of unknown nodal displacements, $-\bb \in \real^{n}$ is the external forcing vector arising from the applied macroscopic strain, and $n = 3\,(n_x+1)(n_y+1)(n_z+1)$ is the total number of degrees of freedom. We solve this system by using a preconditioned conjugate-gradient solver and stopping the iterations once the relative residual norm falls below a specific tolerance. We discuss the preconditioner and the warm-start initial guess in detail in Section~\ref{section:two-resolution-scheme}.
    
Finally, we post-process the converged displacements to compute voxel-averaged strains and stresses. The {ELAS3D} solver outputs the full six-component stress tensor $(\sigma_{xx}, \sigma_{yy}, \sigma_{zz}, \sigma_{xz}, \sigma_{yz}, \sigma_{xy})$ at each voxel, as well as the volume-averaged macroscopic stress $\bsigma^{\mathrm{avg}}$. 


\subsubsection{Stress intensity factor $K$ and Elastic energy release rate $G$}\label{section:SIF-G}

Using the volume-averaged stress $\bsigma^{\mathrm{avg}}$ computed in the previous step, we can determine the driving force for fatigue crack nucleation at the printing defects. Specifically, we model each spherical void of radius $a$ as an embedded penny-shaped crack of the same radius, loaded by the far-field stress. This is an analytical approach that avoids extracting stress intensity factors (SIFs) from near-tip stress fields, which are mesh-dependent and contaminated by stress concentrations near the void boundary~\cite{sneddon1946distribution}.

For the far-field stress, we use the volume-averaged stress $\bsigma^{\mathrm{avg}}$ over the simulation domain. This macroscopic stress serves as a far-field proxy because it is insensitive to mesh refinement near the void, and the simulation domain is much larger than the void so that the volume average is not significantly perturbed by the near-void stress concentration. In contrast, extracting SIFs by fitting the near-tip $\sigma_{ij}(r,\theta) \approx K/\sqrt{2\pi r}\, f_{ij}(\theta)$ asymptotic form would require careful mesh-convergence studies near the void boundary, which the voxelized grid does not naturally support.

Based on this macroscopic stress state, we apply the Sneddon solution for an embedded penny-shaped crack of radius $a$ in an infinite elastic body~\cite{sneddon1946distribution}, and compute the SIFs under the Mode-I~loading:
    \begin{equation}
        K_I   = \frac{2}{\pi}\, \sigma_{zz}^{\mathrm{remote}}\, \sqrt{\pi a},
        \label{eq:K_I_Sneddon}
    \end{equation}
where $\sigma_{zz}^{\mathrm{remote}}$ is the relevant component of the volume-averaged stress $\bsigma^{\mathrm{avg}}$ corresponding to the tensile loading direction. Here, Mode-I corresponds to the crack opening driven by the applied uniaxial tensile strain $\epsilon_{zz}$. We compute $K_I$ at every spherical void and then compute $G$ as described in Section~\ref{section:three-models}. The void with the largest $G$ is identified as the most critical flaw, representing the flaw with the highest crack-driving force under the applied loading.

Although the SIF formula in~\eqref{eq:K_I_Sneddon} is derived for an embedded penny-shaped crack, whereas the actual embedded defects are spherical voids rather than flat, circular cracks, this model provides a useful and conservative approximation. A spherical cavity is a 3D volumetric defect with a smooth surface and a finite stress concentration factor (about 2.045 at the equator for $\nu \approx 0.3$, from the Goodier solution~\cite{goodier1933concentration, pilkey2020peterson}), in contrast to a crack which has a singular $1/\sqrt{r}$ stress field at its tip. By equating the void radius to the penny-shaped crack radius, we treat the void as if a crack of that size were already present. This is a conservative estimate of the stress intensity factor because a flat crack of radius $a$ causes more stress concentration than a spherical void of the same radius. Stress intensity factors for spherical voids have been studied for ceramic materials~\cite{baratta1978stress} and under certain initial crack assumptions~\cite{zhang2024stress}. Ultimately, the penny-shaped crack approximation avoids complex stress concentration factor calculations and initial crack length assumptions while providing a conservative bound on $K_I$.

\subsubsection{Two-fidelity solver}\label{section:two-resolution-scheme}

As introduced in eq.~\eqref{eq:elas3D-linear-system} in Section~\ref{section:ELAS3D}, the linear elastic problem reduces to a large linear system $\bK \bu = -\bb$ with $n$ degrees of freedom. To speed up the computation, we solve this system using a two-resolution scheme: we first run the simulation on a coarse mesh and obtain the converged coarse displacement field~$\bu^{\text{c}}$, which we then prolongate to~$\bu^{\text{f}}$ on a fine mesh to use as a warm-start initial guess. The length of each cubic voxel in the coarse mesh is twice the length of each cubic voxel in the fine mesh.

In the prolongation step, we first subtract the homogeneous macroscopic deformation due to the applied strain $\bE$ from the coarse displacement field~$\bu_*^{\text{c}}$ first. We only prolongate the resulting local, microscopic displacement deviations~$\tilde{\bu}_*^{\text{c}}$. We denote each displacement component by $u_{\text{p}}$ and write the microscopic deviation component as
    \begin{equation}
        \tilde{u}^{\text{c}}_{\text{p}}(i^{\text{}}, j^{\text{}}, k^{\text{}}) = u_{\text{p}}^{\text{}}(i^{\text{}}, j^{\text{}}, k^{\text{}}) - \left( X_i^{\text{}} E_{\text{p}x} + Y_j^{\text{}} E_{\text{p}y} + Z_k^{\text{}} E_{\text{p}z} \right),
    \end{equation}
where $i,j,k$ are the 0-based grid indices of the node on the coarse mesh, $X_i^{\text{}}, Y_i^{\text{}}, Z_i^{\text{}}$ are the corresponding node coordinates, $\text{p} \in \{x,y,z\}$ represents the displacement component direction, and $E_{\text{pq}}$ represents the components of the applied macroscopic strain~$\bE$.

Next, we calculate the corresponding nodal displacement field~$\tilde{\bu}^{\text{f}}$ on the fine mesh. For coincident nodes, where the fine mesh node coordinates match the coarse coordinates, we copy the~$\tilde{\bu}$ value directly. For other nodes, we apply trilinear interpolation using the 8 neighboring coarse nodes to obtain~$\tilde{u}_{\text{p}}^{\text{f}}$. We express the trilinear interpolation as:
    \begin{equation}
        \begin{aligned}
        &\tilde{u}^{\text{f}}_{\text{p}}(i^{\text{f}}, j^{\text{f}}, k^{\text{f}}) = \sum_{a,b,c \in \{0, 1\}} W_{abc}(w_x, w_y, w_z) \tilde{u}^{\text{c}}_{\text{p}}(i+a, j+b, k+c), \\
        \end{aligned}
        \label{eq:triliner-fine}
    \end{equation}
where $W_{abc}(w_x, w_y, w_z) = w_x^a (1-w_x)^{1-a} w_y^b (1-w_y)^{1-b} w_z^c (1-w_z)^{1-c}$ is the trilinear interpolation weight, $i, j, k$ are 0-based grid indices of the node on the coarse mesh, $a, b, c \in \{0, 1\}$ are binary offsets indexing the eight corners of the cell, $i^{\text{f}}, j^{\text{f}}, k^{\text{f}}$ are the grid indices of the corresponding node on the fine mesh, and $w_x, w_y, w_z \in \{0, 0.5\}$ are coordinate weights. Since the fine grid spacing is exactly half of the coarse grid spacing, each coordinate weight is either $0$ when the fine node lies on a coarse grid plane in that direction, or $0.5$ when it lies midway between two coarse grid planes.
    
To ensure consistency with the periodic boundary conditions, we apply periodic wrap-around for nodes on the boundaries of the simulation domain. Additionally, we apply a zero-mean correction to $\tilde{u}^{\text{f}}_{\text{p}}$ obtained in eq.~\eqref{eq:triliner-fine} by subtracting its spatial average to prevent numerical drift in the solver. After this interpolation and zero-mean correction, we add the homogeneous macroscopic affine deformation component back to $\tilde{\bu}^{f}$ to obtain~$\bu^{\text{f}}$, which serves as the warm-start fine-mesh displacement~field.

For the simulation on the fine mesh with $n_f$ degrees of freedom, we further expedite the solution of eq.~\eqref{eq:elas3D-linear-system} by preconditioning the stiffness matrix. Specifically, we solve the system using a conjugate gradient solver where the residual at iteration $k$ is expressed as $\br_k=\bK \bu_k + \bb$. We stop the solver when the relative residual norm $\sqrt{\langle \br_k, \br_k \rangle/\langle \br_0, \br_0 \rangle} \le 10^{-2}$, where $\langle \cdot, \cdot \rangle$ denotes the Euclidean inner~product.

We use a preconditioner that approximates the inverse of the stiffness matrix block-diagonally. To minimize memory consumption, we never assemble the global preconditioning matrix; instead, we compute the correction locally for each node. For node $m$, we denote the $3\times 3$ diagonal block of the global stiffness matrix $\bK$ as $\bK_m$. At any iteration, for the residual $\br_m$ at node $m$, we calculate a preconditioned residual correction as $\bz_m=\omega \bK_m^{-1}\br_m$, where we use a factor $\omega=0.8$ to guarantee stability and ensure that the preconditioned operator remains symmetric positive-definite.

At iteration $k$, we assemble the global preconditioned residual $\mathbf{z}_k$ from the individual nodal corrections $\mathbf{z}_m$. We use this correction, rather than the raw residual, to construct the conjugate search direction $\mathbf{p}_k$:
    \begin{equation}
        \bp_k = \bz_k + \beta_k \bp_{k-1}
    \end{equation} 
and the step length $\alpha_k$:
    \begin{equation}
        \alpha_k = \langle \br_k, \bz_k \rangle / \langle \bp_k, \bK \bp_k \rangle,
    \end{equation}
where the conjugacy coefficient $\beta_k = \langle \br_k, \bz_k \rangle / \langle \br_{k-1}, \bz_{k-1} \rangle$, with $\beta_0 = 0$ at the start. We use this block-Jacobi preconditioner to reduce the condition number of the system and accelerate convergence. We also compute the Cholesky factorization of $\bK_m$ and cache it prior to starting the preconditioning iterations to further expedite the inverse calculation.

\subsection{Surface energy $\gamma$ by MicroFract3D}\label{section:simulation-MicroFract3D}

We describe the graph-based model used to simulate the 3D microscopic crack paths and extract the corresponding surface energy $\gamma$, starting with the mathematical formulation of the graph-theoretic energy minimization problem in Section~\ref{section:graph-cut-theory}. We then present our two-phase formulation incorporating metallic grains and spherical voids in Section~\ref{section:surface-energy-two-phase}. We describe the crack path simulation process and the method for extracting the physical surface energy in Section~\ref{section:crack-path-simulation-surface-energy}.

\subsubsection{Fatigue crack path prediction using graph theoretic model}\label{section:graph-cut-theory}

We predict the 3D microscopic crack path in a polycrystalline metallic part under cyclic loading using a graph-based energy minimization framework~\cite{srivastava2021graph}. Rather than simulating incremental crack growth, the algorithm determines the globally optimal crack surface in a single optimization step.
    
To define this optimal surface, the total energy $E_L$ of a partition $L$ is expressed as the sum of a volume integral and a surface integral:
    \begin{equation}
        E_L = \int_{\Omega} g((x,y,z), L(x,y,z)) \, \mathrm{d}V + \int_{\partial \Omega_C} \gamma((x,y,z), \bn) \, \mathrm{d}A,
    \end{equation}
where $\Omega$ is the material volume, $\partial \Omega_C$ represents the fracture surface, $L(x,y,z) \in \{+1, -1\}$ is the spatial label function, $g((x,y,z), L(x,y,z))$ represents the energy density cost of assigning a label $l$ at coordinate $(x,y,z)$, $\gamma((x,y,z), \bn)$ is the local surface energy density, and $\bn$ is the unit normal to the crack surface. The first term, the volume integral, is the data cost associated with the MicroFract3D algorithm, which we do not include in the crack-path surface energy when calculating the fatigue initiation life. The second term represents the pairwise surface energy of interest.

To solve this continuous minimization problem numerically, we discretize the domain by constructing a dual graph $\mathcal{G} = (V, E)$ on a 3D tetrahedral finite element mesh. Each element represents a vertex $v_i \in V$, and each shared interior face between two adjacent elements represents an undirected edge $(v_i, v_j) \in E$. We define a binary partition labeling $L = \{l_1, \dots, l_{N_V}\}$, where $l_i \in \{+1, -1\}$ represents whether element $i$ lies above ($+1$) or below ($-1$) the crack surface. The crack interface itself is represented by the set of element faces that separate vertices with opposite labels $l_i \neq l_j$.

Based on this graph representation, the discrete energy minimization problem is modeled using an Ising-like Hamiltonian on the graph~$\mathcal{G}$, which approximates the continuous energy $E_L$:
    \begin{equation}
        H(L) = \sum_{v_i \in V} g_i(l_i) \, V_i + \sum_{(v_i, v_j) \in E} \gamma_{ij} \, A_{ij} \Bigl(1 - \delta_{l_i, l_j}\Bigr),
        \label{eq:ising-hamiltonian}
    \end{equation}
where $V_i$ is the volume of element $i$ representing the vertex $v_i$, $g_i(l_i)$ is the discrete unary data cost, $A_{ij}$ is the area of the shared face between elements $i$ and $j$, $\gamma_{ij}$ is the interface energy density, and $\delta_{x,y}$ is the Kronecker delta. The volume summation term approximates the continuous volume integral $\int_{\Omega} g \, \mathrm{d}V$, while the second term ensures that the surface energy is added only across the elements that form the crack interface $l_i \neq l_j$. 

\subsubsection{Two-phase formulation}\label{section:surface-energy-two-phase}

We extend the graph-based model to a two-phase microstructure consisting of metallic grains and spherical voids. Specifically, we assign a unique label to the voids and treat them as a homogeneous phase in the simulation.

Within this two-phase model, the local surface energy density $\gamma_{ij}$ is determined by the phase labels of the adjacent elements. First, if elements $i$ and $j$ belong to the same grain, the transgranular cleavage surface energy depends on the face normal $\bn_{ij}$ rotated into the grain's local crystallographic frame:
    \begin{equation}
        \gamma_{ij} = \gamma_{\mathrm{trans}}(\bn_{ij}).
    \end{equation}
We model SS316L using face-centered cubic grains and model cubic anisotropy based on~\cite{srivastava2021graph}, where elevation $\theta$ and azimuth $\phi$ in the local crystal frame are used to model transgranular energy. We use a Cartesian reference:
    \begin{equation}
        n_1 = \cos\theta\cos\phi, \quad n_2 = \cos\theta\sin\phi, \quad n_3 = \sin\theta,
        \label{eq:spherical-to-cartesian}
    \end{equation}
and thus model the normalized transgranular surface energy of the crack using the anisotropy parameter~$\delta$:
    \begin{equation}
        \gamma_{\mathrm{trans}}(\bn) = 1 + \delta - \delta \left( n_1^4 + n_2^4 + n_3^4 - 2(n_1^2 n_2^2 + n_2^2 n_3^2 + n_3^2 n_1^2) \right),
        \label{eq:gamma-trans}
    \end{equation}
where $\bn = (n_1, n_2, n_3)$ is the unit normal to the face rotated to the crystal frame. We detail the value of $\delta$ used in simulations in Section~\ref{section:num-parameters}. Since $\bn$ is a unit normal satisfying $n_1^2 + n_2^2 + n_3^2 = 1$, squaring both sides and substituting into eq.~\eqref{eq:gamma-trans}, it simplifies to
    \begin{equation}
        \gamma_{\mathrm{trans}}(\bn) = 1 + 4\delta(n_1^2 n_2^2 + n_2^2 n_3^2 + n_3^2 n_1^2).
    \end{equation}
Under this formulation, the energy evaluates to a minimum of $1.0$ on the easiest $\{001\}$ cleavage planes because $\bn=(0,0,1)^{\top}$. The normalized energy $\gamma_{\mathrm{trans}}(\bn)$ increases on other planes, directing the crack along preferred crystallographic paths.

Second, if elements $i$ and $j$ belong to two different metallic grains, the intergranular crack resistance is governed by the misorientation angle $\Delta\theta_{ij}$:
    \begin{equation}
        \gamma_{ij} = E_{GG} \left( 1 + 0.1 \, \frac{|\Delta\theta_{ij}|}{\pi} \right),
        \label{eq:gamma-inter}
    \end{equation}
where $E_{GG}$ is the baseline grain-grain boundary energy parameter, and $\Delta\theta_{ij}$ is computed from the grain Rodrigues orientations. Setting $E_{GG} = 1.2$ makes grain boundaries 20\% more resistant to cracking than the easiest transgranular cleavage plane, which models the ductile nature of 316L stainless steel and prevents premature intergranular failure. 

Third, if one element is a metallic grain and the other is a void, the interface energy is constant:
    \begin{equation}
        \gamma_{ij} = E_{GP},
        \label{eq:gamma-gp}
    \end{equation}
where $E_{GP}$ is the grain-void surface energy. We use a low value ($E_{GP} \ll 1.0$), which reflects the physical reality that voids represent pre-existing empty space requiring very little energy to crack, thereby acting as strong attractors that guide the crack path. Finally, if both elements lie within a void, we set $\gamma_{ij} = 0$ since no energy is required to propagate a crack through empty space. 

\subsubsection{Crack path simulation and surface energy extraction}\label{section:crack-path-simulation-surface-energy}

To implement the graph model numerically, a Delaunay triangulation is used to generate a random 3D tetrahedral mesh. To resolve the complex geometry of spherical voids, we sample boundary voxels at the interface of the void and matrix phases using 6-neighbor voxel connectivity. We inject these coordinate locations as mandatory vertices during triangulation to ensure that the mesh naturally refines near void boundaries to capture local geometric features. Each tetrahedral element is mapped to its corresponding grain or void phase via a centroid lookup.

Once the mesh is constructed, we simulate Mode-I tensile loading by locking the labels at the boundaries. We force the top $5\%$ of the elements to the label $l_i = +1$ and the bottom $5\%$ to $l_i = -1$. The left boundary at the mid-plane is split to seed the initiation of the crack. We introduce a weak parabolic potential well centered at $z = 0.5$ to serve as a macroscopic~constraint:
    \begin{equation}
        g_i(l_i) := U_0 \, \mathcal{H}\left( -l_i (z_i - 0.5) \right) (z_i - 0.5)^2,
        \label{eq:unary-well}
    \end{equation}
where $U_0$ is the data factor and $\mathcal{H}$ is the Heaviside step function. This term acts as a soft constraint that keeps the crack path near the mid-plane while letting the microstructural grains and voids dictate the local details of the path.

To find the partition that minimizes the discrete Hamiltonian~\eqref{eq:ising-hamiltonian}, we employ the graph-cut optimization library~\cite{delong2012fast}. Because this library requires integer weights, we scale both the unary and pairwise matrices by a large factor and cast them to 32-bit integers before solving. We minimize the Hamiltonian~\eqref{eq:ising-hamiltonian} globally using the alpha-expansion algorithm. After the solver converges, we extract the physical crack surface and compute the final surface energy $\gamma$ by summing only the physical pairwise edge weights crossing the partition interface:
    \begin{equation}
        \gamma = \sum_{(v_i, v_j) \in E, \, l_i \neq l_j} \gamma_{ij} \, A_{ij}.
        \label{eq:extracted-gamma}
    \end{equation}
We explicitly exclude the unary data cost from the final surface energy calculation so that the output represents a pure crack-resistance score of the microstructural path. 
    
Note that since the simulation computes a normalized dimensionless surface energy score, where the baseline transgranular cleavage energy is 1.0, we multiply this normalized score by the base surface energy $\gamma_0$ and the physical voxel area. We list the values we use for surface energy calculations, including $E_{GG}$, $E_{GP}$, $U_0$, $\gamma_0$ and the large integer scaling factor, in Section~\ref{section:num-parameters}.
    
\subsection{Plastic work $\omega^p$ by crystal plastic finite element analysis}\label{section:simulation-CPFE}
We perform crystal plasticity finite element analysis to compute the local plastic work using the PRISMS-Plasticity software~\cite{yaghoobi2019prisms} and extract the FIP $\omega^p$. We first describe the crystal plasticity constitutive model in Section~\ref{section:CPFE-constitutive} and its numerical incremental formulation with the equilibrium solver in Section~\ref{section:CPFE-incre-equi}. We then discuss the FIP extraction based on the accumulated plastic work in Section~\ref{section:CPFE-FIP}. 


\subsubsection{Constitutive model}\label{section:CPFE-constitutive}

The PRISMS-Plasticity software implements a rate-independent single crystal plasticity theory~\cite{anand1996computational} with support for multiphase materials. Under this framework, we adopt a two-phase material model to represent the metal grains with embedded voids. We use a face-centered cubic crystal structure with 12 slip systems for the metallic phase and model voids as a material with near-zero elastic stiffness. We assign the single-crystal elastic stiffness constants and crystal plasticity parameters based on the metallic material used for printing. For voids, a small fraction of the metal stiffness and slip resistance is adopted so that the void phase remains elastic and does not undergo plastic slip. This avoids numerical instabilities associated with zero-stiffness void elements. %

To describe the deformation kinematics, two independent deformation mechanisms accommodate the applied deformation: the elastic distortion of the crystal lattice and the pure shear induced by the plastic slip. We use a finite deformation framework with a multiplicative decomposition of the deformation gradient tensor
    \begin{equation}
        \bF := \bF^{e}\bF^{p},
    \end{equation}
where $\bF^{e}$ is the elastic distortion and $\bF^{p}$ accounts for the plastic deformation through crystallographic slip. We define the plastic velocity gradient $\bL^{p}$, the macroscopic rate of plastic distortion, as a sum of shear rates on the slip systems:
    \begin{equation}
        \bL^{p} := \sum_{\alpha=1}^{N_s} \dot{\gamma}^{\alpha}\,\bS^{\alpha},
    \end{equation}
where $\dot{\gamma}^{\alpha}$ is the shearing rate on slip system $\alpha$, and $\bS^{\alpha} = \bR\,(\bm^{\alpha}\otimes\bn^{\alpha})\, \bR^{\top}$ is the Schmid tensor rotated to the sample frame and $\bR$ is the crystal orientation rotation matrix. The slip direction and the slip plane normal are $\bm^{\alpha}$ and $\bn^{\alpha}$, respectively.

Next, we write the constitutive response using the second Piola--Kirchhoff stress in the intermediate~configuration
    \begin{equation}
        \bT^{*} = \cC[\bE^{e}] = \cC\left[\frac{1}{2}\left(\bC^{e} - \bI\right)\right],
    \end{equation}
where $\cC$ is the fourth-order anisotropic elastic stiffness tensor and $\bE^{e} = \frac{1}{2}(\bC^{e} - \bI)$ is the elastic Green--Lagrange strain tensor, with $\bC^{e} = (\bF^{e})^{\top}\bF^{e}$.

To model plastic flow, we compute the resolved shear stress on each slip system as
    \begin{equation}
        \tau^{\alpha} = (\bC^{e}\,\bT^{*}) : \bS^{\alpha}.
        \label{eq:resolved-shear}
    \end{equation}
Rate-independent slip activation is thus governed by the yield function on each slip system,
    \begin{equation}
        f^{\alpha} = \lvert \tau^{\alpha} - W^{\alpha}_{\mathrm{kh}} \rvert - s^{\alpha} \le 0,
    \end{equation}
where $s^{\alpha}$ is the isotropic slip resistance and $W^{\alpha}_{\mathrm{kh}}$ is the kinematic backstress. The isotropic resistance $s^\alpha$ evolves with latent hardening parameters $q_{\alpha\beta}$, while the kinematic backstress $W^\alpha_{\mathrm{kh}}$ captures the cyclic Bauschinger effect, as detailed in~\cite{yaghoobi2019prisms}. Finally, we recover the spatial Cauchy stress $\bsigma$ and the work-conjugate first Piola--Kirchhoff stress $\bP$ via
    \begin{equation}
        \bsigma = \frac{1}{\det(\bF^{e})}\,\bF^{e}\,\bT^{*}\,(\bF^{e})^{T}, \quad \text{and} \quad \bP = \det(\bF)\,\bsigma\,\bF^{-T}.
    \end{equation}

\subsubsection{Incremental formulation and equilibrium solver}\label{section:CPFE-incre-equi}

To solve the mechanical problem under cyclic loading, we apply displacement-controlled, fully reversed cyclic loading in the $z$-direction and discretize the loading history into increments. In each increment, we compute a trial elastic deformation gradient $\bF^{e}(t + \Delta t)^{\mathrm{tr}}$ assuming purely elastic deformation from time $t$ to $t + \Delta t$:
\begin{equation}
    \bF^{e}(t + \Delta t)^{\mathrm{tr}} = \bF(t + \Delta t)\,\bF^{p}(t)^{-1},
\end{equation}
where $\bF(t + \Delta t)$ is the total deformation gradient at time $t + \Delta t$, and $\bF^{p}(t)$ is the plastic deformation gradient at the start of the increment. 
    
Based on this trial elastic state, we calculate the trial resolved shear stress $\tau^{\alpha}(t + \Delta t)^{\mathrm{tr}}$ evaluated on slip system $\alpha$ at time $t + \Delta t$~as
    \begin{equation}
        \tau^{\alpha}(t + \Delta t)^{\mathrm{tr}} = \left(\bC^{e}(t + \Delta t)^{\mathrm{tr}}\,\bT^{*}(t + \Delta t)^{\mathrm{tr}}\right) : \bS^{\alpha},
    \end{equation}
where $\bC^{e}(t + \Delta t)^{\mathrm{tr}} = (\bF^{e}(t + \Delta t)^{\mathrm{tr}})^{T}\bF^{e}(t + \Delta t)^{\mathrm{tr}}$ is the trial right Cauchy--Green elastic deformation tensor, and $\bT^{*}(t + \Delta t)^{\mathrm{tr}} = \cC\left[\frac{1}{2}\left(\bC^{e}(t + \Delta t)^{\mathrm{tr}} - \bI\right)\right]$ is the trial second Piola--Kirchhoff stress. We define a trial yield function $f^\alpha(t + \Delta t)^{\mathrm{tr}}$ for slip system $\alpha$ at $t+\Delta t$ as:
    \begin{equation}
        f^\alpha(t + \Delta t)^{\mathrm{tr}} = \lvert\tau^{\alpha}(t + \Delta t)^{\mathrm{tr}} - W^{\alpha}_{\mathrm{kh}}(t)\rvert - s^{\alpha}(t),
    \end{equation}
where $W^{\alpha}_{\mathrm{kh}}(t)$ is the kinematic backstress and $s^{\alpha}(t)$ is the isotropic slip resistance at the start of the increment. If the trial yield function $f^\alpha(t + \Delta t)^{\mathrm{tr}} > 0$, the slip system $\alpha$ is active. We denote the set of potentially active systems by $\mathcal{A}$. To determine the plastic shear increments on $\mathcal{A}$, we solve
    \begin{equation}
        \sum_{\beta \in \mathcal{A}} A^{\alpha\beta}\,\Delta\gamma^{\beta} = f^{\alpha}(t + \Delta t)^{\mathrm{tr}} \quad \forall \alpha \in \mathcal{A},
    \end{equation}
where the index $\beta \in \mathcal{A}$ runs over all active systems that couple with system $\alpha$, $\Delta\gamma^\beta$ is the unknown plastic shear increment on slip system $\beta$, and $A^{\alpha\beta}$ represents the coupling between the active systems. Note that the slip-system interaction is due to elastic stiffness and hardening and is captured by the matrix $A^{\alpha\beta}$.

Because active slip systems can be redundant or become inactive during the loading increment, we iteratively adjust the active set $\mathcal{A}$ by removing any system where the computed increment is non-positive until all remaining systems in $\mathcal{A}$ satisfy $\Delta\gamma^\alpha > 0$. Once the active set $\mathcal{A}$ and the plastic shear increments $\Delta\gamma^\alpha$ have converged, we update the plastic deformation gradient $\bF^{p}(t + \Delta t)$ at the end of the increment using a matrix exponential map:
    \begin{equation}
        \bF^{p}(t + \Delta t) = \exp\!\left(\Delta\bL^{p}\right)\,\bF^{p}(t),
    \end{equation}
where $\Delta\bL^{p} = \sum_{\alpha \in \mathcal{A}} \operatorname{sign}\!\left[\tau^{\alpha}(t + \Delta t)^{\mathrm{tr}} - W^{\alpha}_{\mathrm{kh}}(t)\right]\,\Delta\gamma^{\alpha}\,\bS^{\alpha}$. Recall that $\tau^{\alpha}(t + \Delta t)^{\mathrm{tr}}$ is the trial shear stress evaluated at time $\tau$, $W^{\alpha}_{\mathrm{kh}}(t)$ is the kinematic backstress, and $\bS^{\alpha}$ is the Schmid tensor rotated to the sample frame. Using the updated $\bF^p(t + \Delta t)$, we recompute the elastic deformation gradient $\bF^{e}(t + \Delta t) = \bF(t + \Delta t)\bF^{p}(t + \Delta t)^{-1}$, update the stress state, and advance the internal hardening variables to time $t + \Delta t$. 

With the local stress update completed at each quadrature point, the global boundary value problem enforces mechanical equilibrium across the domain. The weak form of equilibrium is written as:
    \begin{equation}
        G(\bu, \delta\bu) \equiv \int_{\beta_0} \bP : \nabla_0 \delta\bu \, \mathrm{d}V - \int_{\partial\beta_0} \blambda \cdot \delta\bu \, \mathrm{d}A = 0
    \end{equation}
for all kinematically admissible virtual displacements $\delta\bu$, where $\bP$ is the first Piola--Kirchhoff stress, $\blambda$ is the boundary traction, and $\beta_0$ is the reference configuration. We solve this global nonlinear equilibrium equation $G(\bu, \delta\bu) = 0$ using the Newton--Raphson method. 

\subsubsection{Plastic work-based fatigue indicator parameter}\label{section:CPFE-FIP}

From the converged stress and deformation fields, the crystal plasticity finite element solver computes the cumulative work density $w_e(t)$ at each mesh element $e$ by integrating the stress over the loading history using a trapezoidal rule:
    \begin{equation}
        w_e(t) = \sum_{n=0}^{N(t)} \sum_{q} \frac{1}{2}\left(\bP^{n+1} + \bP^{n}\right) : \Delta\bF^n\, J_q\,w_q,
    \end{equation}
where $N(t)$ is the number of time increments up to time $t$, $q$ is the quadrature point index in the element $e$, $J_q$ and $w_q$ are the Jacobian determinant and quadrature weight at point $q$, respectively, $\bP^n$ is the first Piola-Kirchhoff stress at time instance $t_{n}$, and $\Delta\bF^n = \bF^{n+1} - \bF^{n}$ is the increment of the deformation gradient during increment $n$.
    
The second loading cycle is a stabilized cycle, and we define the local FIP $\omega_{p,e}$ for element $e$ as the net work density accumulated during the second loading cycle:
    \begin{equation}
        \omega_{p,e} := w_e(t_{\mathrm{end,\,cycle\,2}}) - w_e(t_{\mathrm{end,\,cycle\,1}}).
    \end{equation}
Because the elastic strain energy is state-dependent and returns to the same value at the end of every loading cycle, subtracting the cumulative work density between the endpoints of cycle 1 and cycle 2 exactly isolates the net dissipated plastic work density over the cycle.
    
To identify the critical site for fatigue initiation, we exclude all elements belonging to the void phase, and compute the peak FIP over the metallic phase domain $\mathcal{M}$ as:
        \begin{equation}
            \mathrm{FIP}_{\mathrm{peak}} := \max_{e \in \mathcal{M}}\,\omega_{p,e},
        \end{equation}
and use it as the FIP $\omega^p$ to characterize the most critical local plastic deformation hotspot, which correlates with microstructural fatigue crack initiation.

\section{Uncertainty in fatigue initiation life}\label{section:uncertainty-initiation-life}

The fatigue initiation life of an additively manufactured metallic part is inherently uncertain. This uncertainty stems from the complex microstructure produced during the printing process and the challenges of in situ microstructure characterization. During the printing process, a laser beam selectively melts and fuses the powder particles according to the desired part model data. The rapid heating of the upper surface combined with the slow heat conduction of the underlying layers creates steep temperature gradients. These gradients generate convection currents that affect material flow, rendering the resulting microstructure of metallic grains and nonmetallic voids highly variable. In addition, during the printing process, it is not feasible to accurately measure the geometric properties of the microstructure features, such as grain orientations, lengths, diameters, and void shapes, sizes, and spatial distributions. Consequently, such uncertainties propagate to the fatigue-affecting quantities and ultimately to the fatigue initiation life.

We develop an uncertainty propagation framework to determine the probability distribution of the fatigue initiation life $N$. We first outline the statistical distribution fitting process for the individual fatigue-affecting quantities $G$, $\gamma$, and $\omega^p$ using the block maxima methodology~\cite{romano2017qualification} in Section~\ref{section:pdf-fitting}. We then derive the analytical, closed-form probability density function (PDF) for the fatigue initiation life $N$ under the assumption that $G$, $\gamma$, and $\omega^p$ are correlated normal variables in Section~\ref{section:normal-log-normal}.

\subsection{Distribution fitting}\label{section:pdf-fitting}

We first detail the block maxima approach for extracting representative extreme values from all blocks of a specific region. We then present the generalized extreme value distribution and compare it with the normal and lognormal distributions using model selection criteria.

Rather than fitting distributions to all local data points across the entire microstructure, we adopt the block maxima methodology to capture the critical crack-initiation-controlling values. As shown in Figure~\ref{fig:block-maxima}, within a 3D-printed specimen, we first select a cube that contains a representative microstructure, including both metallic grains and spherical voids, and subdivide it into smaller subcubes for extreme value analysis.

    \begin{figure}
        \centering
        \input{fig-block-maxima}
        \caption{Schematic of the block maxima subcube extraction. The red dots in the part (left) are defects obtained from micro-computed tomography scans. The black wireframe cube in the printed part indicates the local region modeled as the global cube shown on the right. We partition it into subcubes for extreme value analysis. On the surface of the global cube, we show grains with different colors and the partition into 64 subcubes with red lines.}
        \label{fig:block-maxima}
    \end{figure}
    
Next, we run the three simulations discussed above and extract one representative value for each of the three quantities from each small block. For the elastic energy release rate $G$, we select the maximum $G$ value within each subcube. This represents the most critical pre-existing flaw in the subcube. For the surface energy $\gamma$, we select the minimum surface energy value obtained from the graph-cut crack path optimization. This represents the most probable crack path. For the plastic work $\omega^{p}$, we select the maximum peak accumulated plastic work density FIP within each subcube. This represents the plastic deformation hotspot.
        
The block maxima are typically fitted to a generalized extreme value distribution (GEV), an asymptotic limit distribution given by:
        \begin{equation}
            F(x; \mu, \sigma, \xi) = \exp \left( -\left[ 1 + \xi \left( \frac{x - \mu}{\sigma} \right) \right]^{-1/\xi} \right),
            \label{eq:GEV-pdf}
        \end{equation}    
where $\mu \in \real$ is the location parameter, $\sigma > 0$ is the scale parameter, and $\xi \in \real$ is the shape parameter. The shape parameter determines the tail behavior: $\xi > 0$ corresponds to the heavy-tailed Fr{\'e}chet distribution, $\xi < 0$ to the upper-bounded Weibull distribution, and the limit $\xi \to 0$ to the light-tailed Gumbel distribution. We first fit a GEV distribution to the block maxima from the set of subcubes. In addition to the GEV distributions, we also fit these block extremes to normal and lognormal distributions.

We perform maximum likelihood estimation for each distribution and compute the Akaike Information Criterion (AIC) and Bayesian Information Criterion (BIC):
    \begin{equation}
        \text{AIC} = 2k - 2\ln(\hat{L}),
        \label{eq:AIC}
    \end{equation}
    \begin{equation}
        \text{BIC} = k\ln(n) - 2\ln(\hat{L}),
        \label{eq:BIC}
    \end{equation}
where $k$ is the number of parameters needed to define a distribution, $n$ is the number of data points, and $\hat{L}$ is the maximum likelihood. We use $n$ data points to estimate all three distributions. For a generalized extreme value distribution, there are three parameters: the location parameter $\mu$, the scale parameter $\sigma$, and the shape parameter $\xi$ as mentioned in eq.~\eqref{eq:GEV-pdf}; thus, the number of parameters is $k=3$. For a normal distribution with mean $\mu_N$ and standard deviation $\sigma_N$, the number of parameters is $k$ = 2. For a lognormal distribution, we fit $k=3$ parameters, namely the location parameter $\theta$, the shape parameter $s$, and the scale parameter $m$.

\subsection{Fatigue initiation life distribution when $\gamma$, $G$, $\omega^p$ are jointly normal}\label{section:normal-log-normal}

We derive a closed-form expression for the PDF of the fatigue initiation life $N$ as expressed by eq.~\eqref{eq:fatigue_life} assuming that the three fatigue-affecting quantities are normally distributed random variables. We analyze the properties of the numerator and address the effects of correlations among the three quantities, as they are computed using a joint microstructure. In Corollary~\ref{cor:PDF-N-b}, we then extend the PDF to the case where the stored plastic work does not linearly increase with cycles as described in eq.~\eqref{eq:omega_p_N_W} and the fatigue law follows a power-law relation \cite{heinrich2017method}. Throughout the numerical results in Section~\ref{sec: num}, we adopt the power-law relation in Corollary~\ref{cor:PDF-N-b} with $b=0.95$ for all five cases. 


\begin{proposition}\label{pro:PDF-N}
    For the fatigue initiation life $N$ from eq.~\eqref{eq:fatigue_life}, where the surface energy $\gamma \sim \mathcal{N}(\mu_1, \sigma_1^2)$, the elastic energy release rate $G \sim \mathcal{N}(\mu_2, \sigma_2^2)$, and the plastic work, $\omega^p \sim \mathcal{N}(\mu_3, \sigma_3^2)$ are jointly normal random variables with covariances $\operatorname{cov}(\gamma, G) = \rho_{12} \sigma_1 \sigma_2$, $\operatorname{cov}(\gamma, \omega^p) = \rho_{13} \sigma_1 \sigma_3$, and $\operatorname{cov}(G, \omega^p) = \rho_{23} \sigma_2 \sigma_3$. The PDF of the fatigue initiation life $N$ is then given by:
    \begin{equation}
        p_N(n) = p_Z\left(n - c\right),
        \label{eq:PDF-N-final}
    \end{equation}
    where 
    \begin{equation}
        p_Z(z) = \frac{1}{2\pi \sigma_{x''} \sigma_3 A(z)^2} \exp\left(-\frac{C}{2}\right) \left[ \sqrt{2\pi} \left(\frac{B(z)}{A(z)}\right) \exp\left( \frac{B(z)^2}{2A(z)^2} \right) \operatorname{erf} \left( \frac{B(z)}{\sqrt{2}A(z)} \right) + 2 \right],
        \label{eq:PDF-Z-definition}
    \end{equation}
    and the auxiliary parameters are:
    \begin{equation*}
        A(z) = \sqrt{\frac{z^2}{\sigma_{x''}^2} + \frac{1}{\sigma_3^2}}, \quad
        B(z) = \frac{\mu_{x''}}{\sigma_{x''}^2}z + \frac{\mu_3}{\sigma_3^2}, \quad
        C = \frac{\mu_{x''}^2}{\sigma_{x''}^2} + \frac{\mu_3^2}{\sigma_3^2}, \quad
        c = \rho_{a3}\frac{\sigma_a}{\sigma_3}.
    \end{equation*}
    
    Here, the mean and variance of the decoupled numerator fluctuation $x''$ are:
    \begin{equation*}
        \mu_{x''} = \mu_a - \rho_{a3}\mu_3 \frac{\sigma_a}{\sigma_3}, \quad
        \sigma_{x''}^2 = \sigma_a^2(1-\rho_{a3}^2),
    \end{equation*}
    with the numerator mean and variance given by $\mu_a = 2\mu_1 - \mu_2$ and $\sigma_a^2 = 4\sigma_1^2 + \sigma_2^2 - 4\rho_{12}\sigma_1\sigma_2$, and the numerator-denominator correlation coefficient $\rho_{a3}$ defined as:
    \begin{equation}
        \rho_{a3} = \frac{2\rho_{13}\sigma_1 - \rho_{23}\sigma_2}{\sqrt{4\sigma_1^2 + \sigma_2^2 - 4\rho_{12}\sigma_1\sigma_2}}.
         \label{eq:rho-a-3}
    \end{equation}
\end{proposition}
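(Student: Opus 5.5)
Write the numerator as $a := 2\gamma - G$, so that $N = a/\omega^p$. Since $(\gamma, G, \omega^p)$ is jointly normal, the pair $(a,\omega^p)$ is bivariate normal. Its mean is $\mu_a = 2\mu_1-\mu_2$ and its variance is $\sigma_a^2 = 4\sigma_1^2+\sigma_2^2-4\rho_{12}\sigma_1\sigma_2$. The covariance with the denominator is $\operatorname{cov}(a,\omega^p) = 2\rho_{13}\sigma_1\sigma_3 - \rho_{23}\sigma_2\sigma_3$. Dividing by $\sigma_a\sigma_3$ gives exactly $\rho_{a3}$ in eq.~\eqref{eq:rho-a-3}. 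This reduces the problem to the law of a ratio of two correlated normals.

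Next, decorrelate the pair by regressing the numerator on the denominator. Set $x'' := a - \rho_{a3}\frac{\sigma_a}{\sigma_3}\omega^p$. By the standard Gaussian conditioning argument, $x''$ is normal and uncorrelated with $\omega^p$, hence independent of it. Its mean is $\mu_a - \rho_{a3}\mu_3\sigma_a/\sigma_3 = \mu_{x''}$, and its variance is $\sigma_a^2 - 2\rho_{a3}^2\sigma_a^2 + \rho_{a3}^2\sigma_a^2 = \sigma_a^2(1-\rho_{a3}^2)$. The ratio then splits as
\begin{equation*}
N = \frac{x''}{\omega^p} + \rho_{a3}\frac{\sigma_a}{\sigma_3} = Z + c, \qquad Z := \frac{x''}{\omega^p}.
\end{equation*}
The translation gives $p_N(n) = p_Z(n-c)$, which is eq.~\eqref{eq:PDF-N-final}.

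It remains to compute the density of the ratio $Z$ of two independent normals. Use the standard transformation $(x'',\omega^p)\mapsto(z,w)$ with $x'' = zw$, whose Jacobian is $|w|$. This gives
\begin{equation*}
p_Z(z) = \int_{-\infty}^{\infty} |w|\, \frac{1}{2\pi\sigma_{x''}\sigma_3}\exp\!\Big(-\tfrac{(zw-\mu_{x''})^2}{2\sigma_{x''}^2}-\tfrac{(w-\mu_3)^2}{2\sigma_3^2}\Big)\,\mathrm{d}w .
\end{equation*}
Expanding the exponent gives $-\tfrac12\big(A(z)^2w^2 - 2B(z)w + C\big)$, with $A$, $B$, $C$ as in the statement. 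Completing the square yields $-\tfrac{A^2}{2}(w-B/A^2)^2 + \tfrac{B^2}{2A^2} - \tfrac C2$.

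The main computational step is evaluating $\int |w| e^{-A^2 w^2/2 + Bw}\,\mathrm{d}w$ in closed form. I would split the integral at $w=0$ and substitute $u = Aw - B/A$ on each half-line. Each half then reduces to an elementary Gaussian integral, which produces the term $e^{-B^2/(2A^2)}/A^2$ (up to the prefactor), plus a term of the form $\frac{B}{A^3}\sqrt{\pi/2}$ times an erf or erfc. Taking the difference of the two halves, the erfc contributions combine into $\operatorname{erf}(B/(\sqrt2 A))$, and the two Gaussian-tail pieces add up to the constant $2$. The result is
\begin{equation*}
\int |w| e^{-A^2w^2/2+Bw}\,\mathrm{d}w = \frac{1}{A^2}\Big[2 + \sqrt{2\pi}\,\frac{B}{A}\,e^{B^2/(2A^2)}\operatorname{erf}\!\big(\tfrac{B}{\sqrt2 A}\big)\Big].
\end{equation*}
Multiplying by $\frac{1}{2\pi\sigma_{x''}\sigma_3}e^{-C/2}$ reproduces eq.~\eqref{eq:PDF-Z-definition}, with the $e^{B^2/(2A^2)}$ factor attached only to the erf term, as displayed. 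This is Hinkley's formula in the independent case, and the bookkeeping of signs in this step is where care is needed.

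The argument assumes $|\rho_{a3}|<1$, so that $\sigma_{x''}>0$, and $\sigma_3>0$.
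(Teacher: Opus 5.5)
Your proposal is correct and follows the paper's route: the same numerator $a = 2\gamma - G$, the same decorrelated variable $x'' = a - \rho_{a3}\frac{\sigma_a}{\sigma_3}\omega^p$, and the same shift by $c$. The only differences are that you derive the independent-normal ratio density yourself (the $|w|$-Jacobian integral split at $w=0$, which checks out), where the paper simply cites Hinkley (1969), and that you state explicitly that ``uncorrelated'' upgrades to ``independent'' because $(x'',\omega^p)$ is jointly Gaussian.
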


\begin{proof}
    The derivation of the closed-form PDF of $N$ proceeds as follows: we define the numerator in eq.~\eqref{eq:fatigue_life} as $a := 2\gamma - G$. Since it is a linear combination of the jointly normal random variables $\gamma$ and $G$, it is also normally distributed, $a \sim \mathcal{N}(\mu_a, \sigma_a^2)$, with mean $\mu_a = 2\mu_1 - \mu_2$ and variance $\sigma_a^2 = 4\sigma_1^2 + \sigma_2^2 - 4\rho_{12}\sigma_1\sigma_2$. The correlation coefficient $\rho_{a3}$ between the numerator $a$ and the denominator $\omega^p$ is computed by evaluating the covariance:
        \begin{equation}
            \operatorname{cov}(2\gamma - G, \omega^p) = 2\operatorname{cov}(\gamma, \omega^p) - \operatorname{cov}(G, \omega^p) = 2\rho_{13}\sigma_1\sigma_3 - \rho_{23}\sigma_2\sigma_3.
        \end{equation}
    Dividing by the product of their standard deviations yields eq.~\eqref{eq:rho-a-3}. Since the numerator $a$ and the denominator $\omega^p$ are correlated, we cannot directly apply the ratio distribution for independent normal variables. To decouple this correlation, we express $a$ and $\omega^p$ in terms of their mean values and the zero-mean fluctuations $x \sim \mathcal{N}(0, \sigma_a^2)$ and $y \sim \mathcal{N}(0, \sigma_3^2)$, such that:
        \begin{equation}
            N = \frac{a}{\omega^p} = \frac{x + \mu_a}{y + \mu_3}.
        \end{equation}
    We then construct a new zero-mean normal variable $x'$ that is orthogonal (uncorrelated) to $y$ via:
        \begin{equation}
            x' = x - \rho_{a3}\frac{\sigma_a}{\sigma_3}y,
        \end{equation}
    which has a variance of $\sigma_a^2(1 - \rho_{a3}^2)$ and satisfies $\operatorname{cov}(x', y) = 0$. We substitute $x = x' + \rho_{a3}\frac{\sigma_a}{\sigma_3}y$ back into the expression for $N$:
        \begin{equation}
            N = \frac{x' + \rho_{a3}\frac{\sigma_a}{\sigma_3}y + \mu_a}{y + \mu_3} = \frac{x' + \mu_a - \rho_{a3}\mu_3\frac{\sigma_a}{\sigma_3}}{y + \mu_3} + \rho_{a3}\frac{\sigma_a}{\sigma_3}.
        \end{equation}
    By defining the mean-shifted variable $x'' = x' + \mu_a - \rho_{a3}\mu_3\frac{\sigma_a}{\sigma_3}$, which is normally distributed as $x'' \sim \mathcal{N}(\mu_{x''}, \sigma_{x''}^2)$, we can write:
        \begin{equation}
            N - c = \frac{x''}{\omega^p},
        \end{equation}
    where $c = \rho_{a3}\frac{\sigma_a}{\sigma_3}$ and $\omega^p = y + \mu_3$. Because $\operatorname{cov}(x'', \omega^p) = \operatorname{cov}(x', y) = 0$, $x''$ and $\omega^p$ are uncorrelated normal random variables.
    
    Since $x''$ and $\omega^p$ are uncorrelated, their ratio $Z = x''/\omega^p$ follows the ratio distribution of uncorrelated normal variables. According to~\cite{hinkley1969ratio}, the PDF of the ratio $Z=X/Y$ of two uncorrelated normal random variables $X\sim \mathcal{N}(\mu_x, \sigma_x^2)$ and $Y\sim \mathcal{N}(\mu_y, \sigma_y^2)$ is given by:
    \begin{equation}
        p_Z(z) = \frac{1}{2\pi \sigma_x \sigma_y \tilde{A}(z)^2} \exp\left(-\frac{\tilde{C}}{2}\right) \left[ \sqrt{2\pi} \frac{\tilde{B}(z)}{\tilde{A}(z)} \exp\left( \frac{\tilde{B}(z)^2}{2\tilde{A}(z)^2} \right) \operatorname{erf} \left( \frac{\tilde{B}(z)}{\sqrt{2}\tilde{A}(z)} \right) + 2 \right],
        \label{eq:PDF-Z}
    \end{equation}
    where $\operatorname{erf}(\cdot)$ is the Gaussian error function, and the auxiliary parameters are:
    \begin{equation*}
        \tilde{A}(z) = \sqrt{\frac{1}{\sigma_x^2}z^2+\frac{1}{\sigma_y^2}},\quad 
        \tilde{B}(z) = \frac{\mu_x}{\sigma_x^2}z + \frac{\mu_y}{\sigma_y^2},\quad 
        \tilde{C} = \frac{\mu_x^2}{\sigma_x^2} +\frac{\mu_y^2}{\sigma_y^2}.
    \end{equation*}
    
    We obtain the PDF $p_Z(z)$ defined in eq.~\eqref{eq:PDF-Z-definition} by substituting $\mu_x$ with $\mu_{x''}$, $\sigma_x$ with $\sigma_{x''}$, $\mu_y$ with $\mu_3$, and $\sigma_y$ with $\sigma_3$. Since $N = Z + c$, we then shift the PDF of $Z$ by the constant $c$, yielding the closed-form PDF of the fatigue initiation life $N$ as $p_N(n) = p_Z(n-c)$.
        
\end{proof}

\newtheorem{corollary}{Corollary}

\begin{corollary}\label{cor:PDF-N-b}
    If the fatigue initiation life is calculated under a power-law relation as:
    \begin{equation}
        N^b = \frac{2\gamma - G}{\omega^p},
        \label{eq:N-power-b}
    \end{equation}
    where $b$ is the exponent of the fatigue life. The probability density function of $N$ is then:
    \begin{equation}
        p_N(n) = b n^{b - 1} \cdot p_Z\left(n^b - \rho_{a3}\frac{\sigma_a}{\sigma_3}\right),
    \end{equation}
    where the auxiliary parameters and $p_Z(z)$ are defined in Proposition~\ref{pro:PDF-N}.
\end{corollary}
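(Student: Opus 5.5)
The plan is to reduce the corollary to Proposition~\ref{pro:PDF-N} by a one-dimensional change of variables. Set $M := N^b = (2\gamma - G)/\omega^p$. The right-hand side is exactly the ratio analyzed in Proposition~\ref{pro:PDF-N}. The decoupling construction there goes through verbatim: form $a = 2\gamma - G$, then $x' = x - \rho_{a3}(\sigma_a/\sigma_3)y$, then the mean shift to $x''$. This gives $M - c = x''/\omega^p = Z$ with $c = \rho_{a3}\sigma_a/\sigma_3$. Hence the density of $M$ is $p_M(m) = p_Z(m - c)$, with $p_Z$ as in eq.~\eqref{eq:PDF-Z-definition} and the same auxiliary parameters $A$, $B$, $C$, $\mu_{x''}$, $\sigma_{x''}$, which I would quote rather than rederive.

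Next, I would define $N = M^{1/b}$ on the event $M > 0$ and apply the standard transformation formula. For $b>0$, the map $h(n) = n^b$ is strictly increasing and $C^1$ on $(0,\infty)$, with $h'(n) = b n^{b-1}$. The distribution function of $N$ is therefore $F_N(n) = \mathbb{P}(M \le n^b) - \mathbb{P}(M \le 0)$ for $n>0$, up to how the nonpositive part is handled. Differentiating in $n$ gives $p_N(n) = p_M(n^b)\, b n^{b-1} = b n^{b-1} p_Z(n^b - \rho_{a3}\sigma_a/\sigma_3)$. This is the claimed formula. It is valid for $n>0$, and $b = 0.95$ in the applications.

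The main obstacle is a support issue rather than the calculation. A ratio of normals has full support on $\real$, so $M$ takes negative values with positive probability, and $N = M^{1/b}$ is then undefined for non-integer $b$. I would address this explicitly. Either interpret the formula as the density of $N$ on $(0,\infty)$, which integrates to $\mathbb{P}(M>0)$ rather than $1$, or condition on $M>0$ and divide by $\mathbb{P}(M>0)$. I would then remark that in the physical regime $\mu_a, \mu_3 \gg \sigma_a, \sigma_3$, so $\mathbb{P}(M\le 0)$ is negligible and the stated density is effectively normalized. With that caveat stated, the remaining steps are the routine Jacobian computation above.
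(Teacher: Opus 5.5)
Your proposal is correct and follows the paper's proof essentially step for step: the paper also sets $Z = N^b - \rho_{a3}\sigma_a/\sigma_3$, uses Proposition~\ref{pro:PDF-N} to identify $Z = x''/\omega^p$ with density $p_Z$, and multiplies by the Jacobian $|b n^{b-1}|$ to obtain the density for $n>0$. Your extra remark is valid and is left implicit in the paper's phrase ``for $n>0$'': the resulting density integrates only to the probability that $(2\gamma - G)/\omega^p > 0$, so it must be read as a density on $(0,\infty)$ or renormalized.
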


\begin{proof}
    Using the transformation of random variables, let $Z = N^b - \rho_{a3}{\sigma_a}/{\sigma_3}$. From Proposition~\ref{pro:PDF-N}, $Z = {x''}/{\omega^p}$ follows the uncorrelated normal ratio distribution $p_Z(z)$ defined in eq.~\eqref{eq:PDF-Z-definition}. The probability density function of $N$ is related to that of $Z$ via the Jacobian of the transformation:
    \begin{equation}
        p_{N}(n) = p_{Z}(z(n)) \left| \frac{\mathrm{d}z}{\mathrm{d}n} \right| = p_{Z}(n^b - c) \left| b n^{b-1} \right|.
    \end{equation}
    For $n > 0$, this yields the closed-form PDF of the fatigue initiation life $N$ under the power-law relation defined in eq.~\eqref{eq:N-power-b}.
\end{proof}

\section{Numerical results}\label{sec: num}

We present the numerical results of the proposed framework, as illustrated in Figure~\ref{fig:three-ingridients}, for predicting the fatigue life with quantified uncertainties for a specific specimen printed using an EOS M290 laser powder bed fusion machine. We first describe the experimental microstructure characterizations of the printed parts, including both metallic grains and voids, in Section~\ref{section:num-inputs}. We then detail the design of numerical experiments, including simulation setup and model parameters, in Section~\ref{section:num-setup}. In Section~\ref{section:num-results}, we show the uncertainty propagation from the fatigue-affecting quantities to the fatigue initiation life using the recommended EOS M290 print parameter set, where we present simulation results, analyze the distributions of each fatigue-affecting quantity, and show the resulting closed-form probability density function for the fatigue initiation life. In addition, we design two numerical studies to understand the effect of grain and defect size variations on the distributions of fatigue initiation life in Sections~\ref{section:num-results-eff-grain} and \ref{section:num-results-eff-void}.

All simulations in this study are executed on a workstation equipped with a 28-core Intel Xeon W-3175X CPU @ 3.10 GHz. Table~\ref{table:cpu-times} summarizes the quantities computed, the number of evaluations per case, the total number of evaluations across all five cases, and the average CPU execution times over 64 runs for each of the three physical models in our workflow.

\begin{table}[htbp]
    \centering
    \caption{Computational cost of the three simulation models in the uncertainty propagation workflow. All evaluations are performed on a 28-core Intel Xeon W-3175X CPU @ 3.10 GHz.}
    \label{table:cpu-times}
    {\footnotesize
    \begin{tabular}{l l c c c}
        \textbf{Model} & \textbf{Quantity Computed} & \makecell{\textbf{CPU Time} \\ \textbf{per Evaluation}}  & \makecell{\textbf{Evaluations} \\ \textbf{per Case}} & \makecell{\textbf{Total} \\ \textbf{Evaluations}}\\
        \hline
        ELAS3D & Energy release rate $G$  & 355 s & 64 & 320\\
        MicroFract3D & Surface energy $\gamma$  & 91 s & 64 & 320\\
        PRISMS-Plasticity & Fatigue indicator parameter $\omega^p$ &  17.97 h & 64 & 320 \\
    \end{tabular}
    }
\end{table}

\subsection{Specimen information}\label{section:num-inputs}

We characterize grains and voids from additively manufactured parts made of 316L stainless steel using the laser powder bed fusion. An EBSD texture analysis of the specimens constructs the 3D grain representation and a micro-computed tomography (microCT) scanning characterizes volumetric flaws across different print parameter sets. These experimental results inform our selections when generating microstructures, including metallic grains and voids, for simulations.

Using an EOS M290 laser powder bed fusion machine, cylindrical specimens with 10 mm diameter and 25 mm height are produced by varying the laser power and scan speed according to the parameter sets in the process map in Figure~\ref{fig:set-param} and in Table~\ref{table:print-parameters}. The CAD models and the printed parts are shown in Figure~\ref{fig:printed-samples}. The produced cylinder specimens have a broad range of microstructures, varying grain lengths, aspect ratios, and orientations, and defect sizes.

        \begin{figure}[htbp]
            \centering
            \begin{subfigure}[t]{0.49\textwidth}
                \centering
                \usetikzlibrary{patterns}%
\begin{tikzpicture}
  \begin{axis}[
    width=0.90\linewidth,
    height=\linewidth,
    axis lines=left,
    xlabel={Scan Speed (mm/s)},
    ylabel={Laser Power (W)},
    xmin=500, xmax=1500,
    ymin=100, ymax=300,
    xtick={500, 700, 900, 1100, 1300, 1500},
    ytick={100, 150, 200, 250, 300},
    enlargelimits=false,
    tick label style={font=\footnotesize},
    label style={font=\small},
    xlabel style={at={(ticklabel cs:0.5)}, anchor=north, yshift=-3mm},
    ylabel style={at={(ticklabel cs:0.5)}, anchor=south, yshift=5mm},
    axis line style={lightgray},
    every tick/.style={lightgray},
    grid=none,
    legend pos=north east,
    legend cell align={left},
    legend style={font=\scriptsize, fill=white, fill opacity=0.85, draw=gray!50, align=left}
  ]
    
    \fill[gray!20] (axis cs:500, 115) -- (axis cs:1120, 300) -- (axis cs:500, 300) -- cycle;
    
    \fill[pattern=north east lines, pattern color=gray!70] (axis cs:820, 100) .. controls (axis cs:1100, 150) and (axis cs:1350, 195) .. (axis cs:1500, 205) 
                  -- (axis cs:1500, 100) -- cycle;
    
    \draw[black, thick] (axis cs:500, 115) -- (axis cs:1120, 300);
    \draw[black, thick] (axis cs:820, 100) .. controls (axis cs:1100, 150) and (axis cs:1350, 195) .. (axis cs:1500, 205);
    
    \node[circle, draw=black, fill=blue!80!white, inner sep=1.8pt, label={[font=\small\bfseries]right:1}] at (axis cs:1083, 195) {};
    \node[circle, draw=black, fill=blue!80!white, inner sep=1.8pt, label={[font=\small\bfseries]left:2}] at (axis cs:700, 195) {};
    \node[circle, draw=black, fill=blue!80!white, inner sep=1.8pt, label={[font=\small\bfseries]left:3}] at (axis cs:700, 220) {};
    \node[circle, draw=black, fill=blue!80!white, inner sep=1.8pt, label={[font=\small\bfseries]below:4}] at (axis cs:1083, 140) {};
    \node[circle, draw=black, fill=blue!80!white, inner sep=1.8pt, label={[font=\small\bfseries, xshift=1pt, yshift=1pt]above right:5}] at (axis cs:1300, 140) {};
    \node[circle, draw=black, fill=blue!80!white, inner sep=1.8pt, label={[font=\small\bfseries, xshift=1pt, yshift=-1pt]below right:6}] at (axis cs:1300, 120) {};
    \node[circle, draw=black, fill=blue!80!white, inner sep=1.8pt, label={[font=\small\bfseries, yshift=1pt]above:7}] at (axis cs:1200, 140) {};
    \node[circle, draw=black, fill=blue!80!white, inner sep=1.8pt, label={[font=\small\bfseries]below:8}] at (axis cs:700, 260) {};
    \node[circle, draw=black, fill=blue!80!white, inner sep=1.8pt, label={[font=\small\bfseries, yshift=1pt]above:9}] at (axis cs:600, 260) {};
    
    \addlegendimage{area legend, fill=gray!20, draw=black}
    \addlegendentry{Keyhole defects}
    
    \addlegendimage{area legend, pattern=north east lines, pattern color=gray!70, draw=black}
    \addlegendentry{Lack-of-fusion defects}

    \addlegendimage{area legend, fill=white, draw=black}
    \addlegendentry{Recommended\\printing parameters}
    
    \addlegendimage{only marks, mark=*, mark options={draw=black, fill=blue!80!white}}
    \addlegendentry{Parameter set}
    
  \end{axis}
\end{tikzpicture}
                \caption{Process map showing the relationship between print parameters and defects. The nine print parameter sets are plotted as numbered points and their exact values are given in Table~\ref{table:print-parameters}.}
                \label{fig:set-param}
            \end{subfigure}
            \hfill
            \begin{subfigure}[t]{0.4\textwidth}
                \centering
                \includegraphics[width=\linewidth]{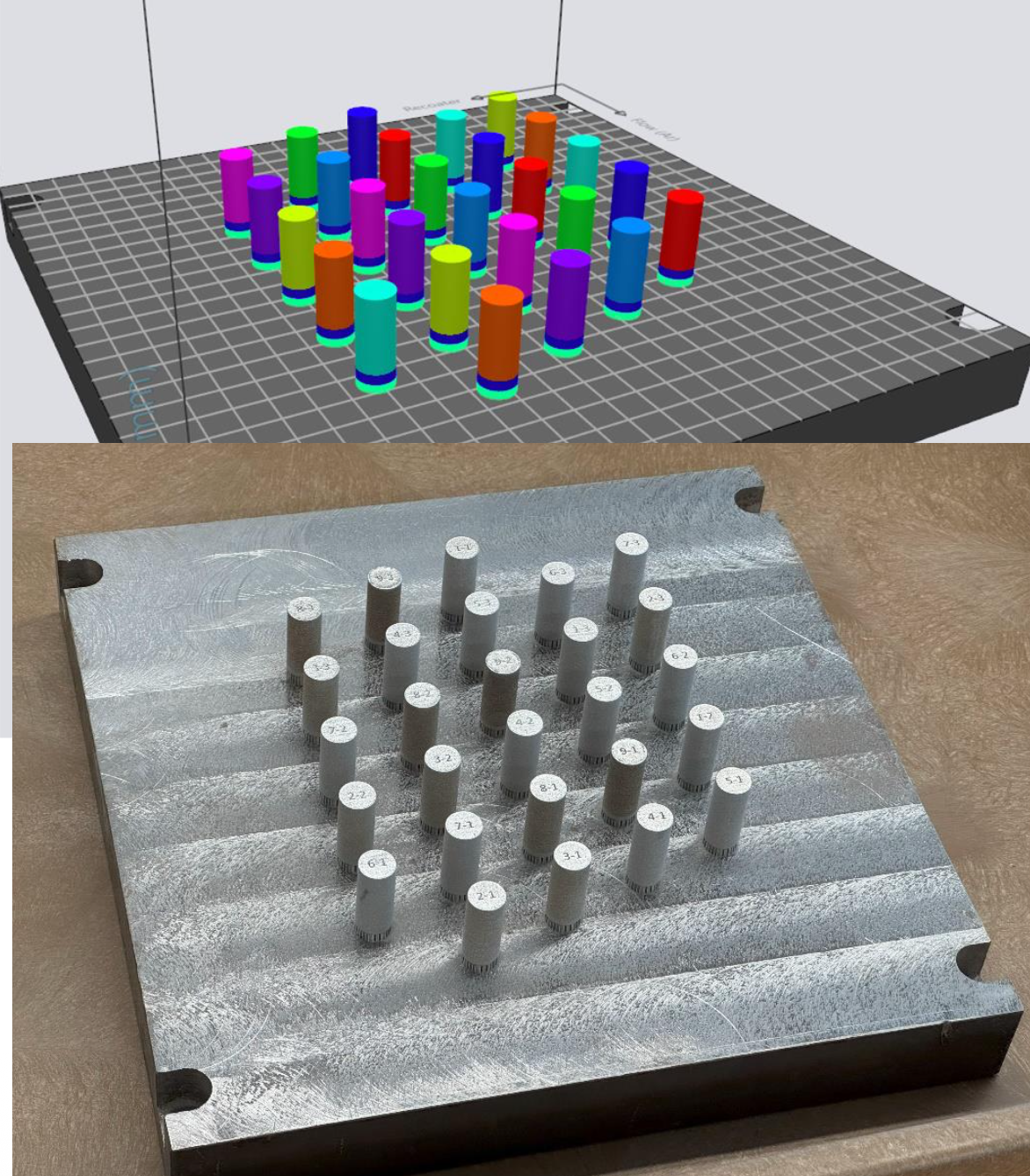}
                \caption{Cylinder specimens are printed according to the different print parameters. For each set, there are multiple samples. Samples printed by recommended and lack-of-fusion parameter sets are marked with blue and cyan colors, respectively.}
                \label{fig:printed-samples}
            \end{subfigure}
            \caption{Overview of print parameter sets and corresponding printed cylinder specimens.}
            \label{fig:combined-process-params}
        \end{figure}
        \begin{table}[htbp]
            \centering
            \caption{Laser power and scan speed for the nine print parameter sets.}
            \label{table:print-parameters}
            \small
            \begin{tabular}{lccccccccc}
                \hline
                \textbf{Parameter set} & \textbf{1} & \textbf{2} & \textbf{3} & \textbf{4} & \textbf{5} & \textbf{6} & \textbf{7} & \textbf{8} & \textbf{9} \\
                \hline
                Laser Power (W) & 195 & 195 & 220 & 140 & 140 & 120 & 140 & 260 & 260 \\
                Scan Speed (mm/s) & 1083 & 700 & 700 & 1083 & 1300 & 1300 & 1200 & 700 & 600 \\
                \hline
            \end{tabular}
        \end{table}
    
EBSD scans are performed for the central part of the specimen using an Oxford Instruments-equipped scanning electron microscope featuring an EBSD camera. Prior to EBSD analysis, the specimens undergo meticulous preparation involving grinding with 600–1200 grit SiC sandpaper, followed by polishing with a 1 $\mu$m diamond suspension and 0.04 $\mu$m colloidal silica to achieve a high-quality surface finish~\cite{andani2024mapping}. Each sample is sectioned at 12.5 mm from the base to perform EBSD scans. A cube subsample is extracted to comprehensively analyze the grain texture along the XY, XZ, and YZ planes at the sample's center and generate a 3D representation of the grain microstructure. We show the location of the cube relative to the whole sample and EBSD results in Figure~\ref{fig:EBSD-results} along with the distributions of grain diameters and aspect ratios. Defects in the samples are obtained using high-resolution microCT scans with phase contrast tomography using a Zeiss Xradia microscope. In Figure~\ref{fig:microCT-defects}, we show defects in two samples printed with different parameters. 

    \begin{figure}[t]
        \centering
        \begin{subfigure}[t]{0.24\textwidth}
            \centering
            \includegraphics[width=\linewidth]{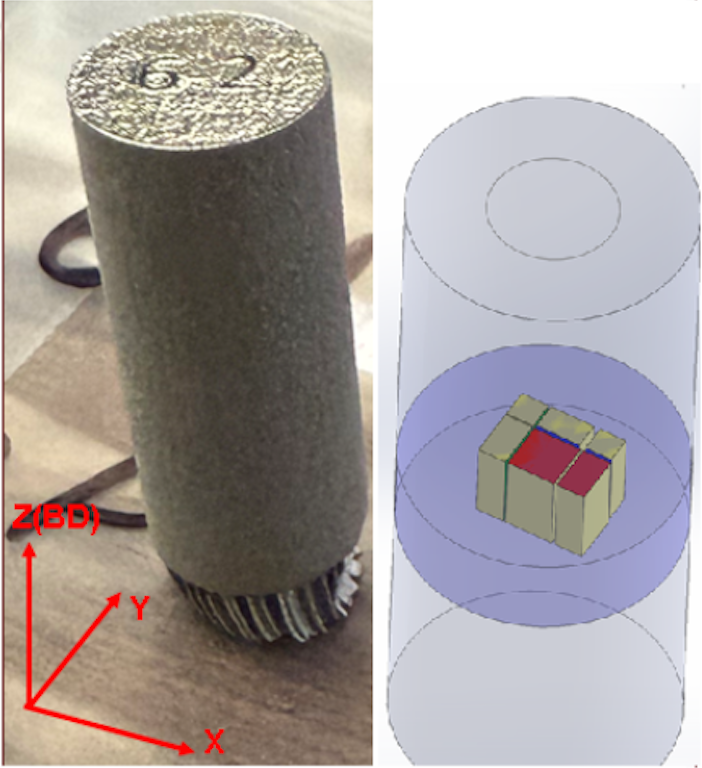}
            \caption{}
            \label{fig:EBSD-results-a}
        \end{subfigure}
        \hfill
        \begin{subfigure}[t]{0.128\textwidth}
            \centering
            \includegraphics[width=\linewidth]{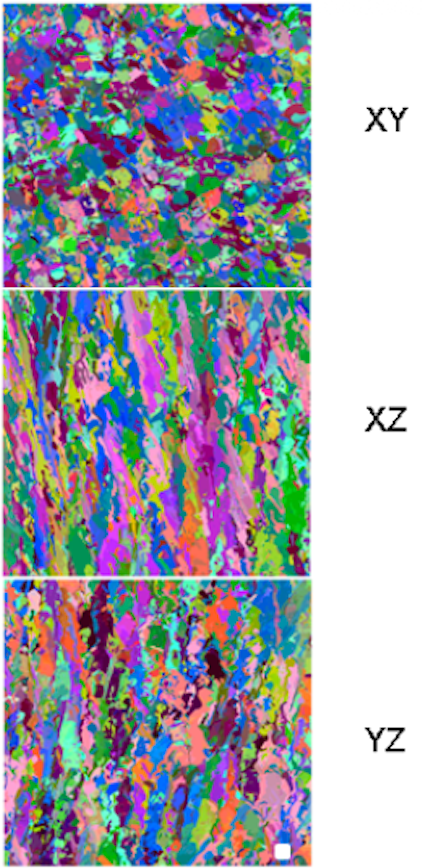}
            \caption{}
            \label{fig:EBSD-results-b}
        \end{subfigure}
        \hfill
        \begin{subfigure}[t]{0.3\textwidth}
            \centering
            \begin{tikzpicture}
  \begin{axis}[
    width=\linewidth,
    height=4cm,
    axis lines=left,
    xlabel={Grain diameter ($\mu$m)},
    ylabel={Probability density},
    xmin=35, xmax=95,
    ymin=0, ymax=0.063,
    grid=both,
    grid style={line width=.1pt, draw=gray!10},
    major grid style={line width=.2pt, draw=gray!20},
    axis line style={lightgray},
    tick label style={font=\footnotesize},
    label style={font=\small},
    every tick/.style={lightgray},
    legend style={at={(0.95,0.95)}, anchor=north east, font=\footnotesize, draw=none, fill=none},
    tick scale binop=\times
  ]
    \addplot[
      color=blue!80!black,
      thick,
      smooth
    ] coordinates {
      (35.0000, 4.82069056e-06)
      (35.6061, 8.88238496e-06)
      (36.2121, 1.58760559e-05)
      (36.8182, 2.75632514e-05)
      (37.4242, 4.65416758e-05)
      (38.0303, 7.65236114e-05)
      (38.6364, 1.22653583e-04)
      (39.2424, 1.91848824e-04)
      (39.8485, 2.93136620e-04)
      (40.4545, 4.37953234e-04)
      (41.0606, 6.40361396e-04)
      (41.6667, 9.17138733e-04)
      (42.2727, 1.28768960e-03)
      (42.8788, 1.77373854e-03)
      (43.4848, 2.39877557e-03)
      (44.0909, 3.18724124e-03)
      (44.6970, 4.16346170e-03)
      (45.3030, 5.35036912e-03)
      (45.9091, 6.76806740e-03)
      (46.5152, 8.43232556e-03)
      (47.1212, 1.03530973e-02)
      (47.7273, 1.25331745e-02)
      (48.3333, 1.49670810e-02)
      (48.9394, 1.76403037e-02)
      (49.5455, 2.05289377e-02)
      (50.1515, 2.35997975e-02)
      (50.7576, 2.68110102e-02)
      (51.3636, 3.01130788e-02)
      (51.9697, 3.34503656e-02)
      (52.5758, 3.67629194e-02)
      (53.1818, 3.99885470e-02)
      (53.7879, 4.30650146e-02)
      (54.3939, 4.59322593e-02)
      (55.0000, 4.85344925e-02)
      (55.6061, 5.08220914e-02)
      (56.2121, 5.27531863e-02)
      (56.8182, 5.42948825e-02)
      (57.4242, 5.54240712e-02)
      (58.0303, 5.61278178e-02)
      (58.6364, 5.64033349e-02)
      (59.2424, 5.62575714e-02)
      (59.8485, 5.57064665e-02)
      (60.4545, 5.47739293e-02)
      (61.0606, 5.34906130e-02)
      (61.6667, 5.18925559e-02)
      (62.2727, 5.00197595e-02)
      (62.8788, 4.79147683e-02)
      (63.4848, 4.56213093e-02)
      (64.0909, 4.31830369e-02)
      (64.6970, 4.06424209e-02)
      (65.3030, 3.80398003e-02)
      (65.9091, 3.54126188e-02)
      (66.5152, 3.27948435e-02)
      (67.1212, 3.02165649e-02)
      (67.7273, 2.77037648e-02)
      (68.3333, 2.52782360e-02)
      (68.9394, 2.29576342e-02)
      (69.5455, 2.07556396e-02)
      (70.1515, 1.86822052e-02)
      (70.7576, 1.67438698e-02)
      (71.3636, 1.49441147e-02)
      (71.9697, 1.32837459e-02)
      (72.5758, 1.17612841e-02)
      (73.1818, 1.03733504e-02)
      (73.7879, 9.11503526e-03)
      (74.3939, 7.98024347e-03)
      (75.0000, 6.96200885e-03)
      (75.6061, 6.05277478e-03)
      (76.2121, 5.24463914e-03)
      (76.8182, 4.52956306e-03)
      (77.4242, 3.89954481e-03)
      (78.0303, 3.34676040e-03)
      (78.6364, 2.86367390e-03)
      (79.2424, 2.44312022e-03)
      (79.8485, 2.07836377e-03)
      (80.4545, 1.76313634e-03)
      (81.0606, 1.49165732e-03)
      (81.6667, 1.25863966e-03)
      (82.2727, 1.05928401e-03)
      (82.8788, 8.89264039e-04)
      (83.4848, 7.44704872e-04)
      (84.0909, 6.22156843e-04)
      (84.6970, 5.18566085e-04)
      (85.3030, 4.31243412e-04)
      (85.9091, 3.57832568e-04)
      (86.5152, 2.96278747e-04)
      (87.1212, 2.44798039e-04)
      (87.7273, 2.01848300e-04)
      (88.3333, 1.66101782e-04)
      (88.9394, 1.36419746e-04)
      (89.5455, 1.11829163e-04)
      (90.1515, 9.15015349e-05)
      (90.7576, 7.47338229e-05)
      (91.3636, 6.09313850e-05)
      (91.9697, 4.95928361e-05)
      (92.5758, 4.02966928e-05)
      (93.1818, 3.26896654e-05)
      (93.7879, 2.64764490e-05)
      (94.3939, 2.14108708e-05)
      (95.0000, 1.72882468e-05)
    };
  \end{axis}
\end{tikzpicture}
            \caption{}
            \label{fig:EBSD-results-c}
        \end{subfigure}
        \hfill
        \begin{subfigure}[t]{0.3\textwidth}
            \centering
            \begin{tikzpicture}
  \begin{axis}[
    width=\linewidth,
    height=4cm,
    axis lines=left,
    xlabel={Grain aspect ratio},
    ylabel={Probability density},
    xmin=0, xmax=18,
    ymin=0, ymax=0.22,
    grid=both,
    grid style={line width=.1pt, draw=gray!10},
    major grid style={line width=.2pt, draw=gray!20},
    axis line style={lightgray},
    tick label style={font=\footnotesize},
    label style={font=\small},
    every tick/.style={lightgray},
    legend style={at={(0.95,0.95)}, anchor=north east, font=\footnotesize, draw=none, fill=none},
    ytick={0.05,0.1,0.15,0.2},
    yticklabels={0.05,0.1,0.15,0.2},
    scaled y ticks=false
  ]
    \addplot[
      color=blue!80!black,
      thick,
      smooth
    ] coordinates {
      (0.5000, 4.93017446e-03)
      (0.6768, 1.60280125e-02)
      (0.8535, 3.38691977e-02)
      (1.0303, 5.62695745e-02)
      (1.2071, 8.05125785e-02)
      (1.3838, 1.04279213e-01)
      (1.5606, 1.25948781e-01)
      (1.7374, 1.44573660e-01)
      (1.9141, 1.59735711e-01)
      (2.0909, 1.71386770e-01)
      (2.2677, 1.79713050e-01)
      (2.4444, 1.85033102e-01)
      (2.6212, 1.87726707e-01)
      (2.7980, 1.88188302e-01)
      (2.9747, 1.86798408e-01)
      (3.1515, 1.83907551e-01)
      (3.3283, 1.79828519e-01)
      (3.5051, 1.74833947e-01)
      (3.6818, 1.69157126e-01)
      (3.8586, 1.62994655e-01)
      (4.0354, 1.56510004e-01)
      (4.2121, 1.49837411e-01)
      (4.3889, 1.43085758e-01)
      (4.5657, 1.36342228e-01)
      (4.7424, 1.29675636e-01)
      (4.9192, 1.23139390e-01)
      (5.0960, 1.16774082e-01)
      (5.2727, 1.10609724e-01)
      (5.4495, 1.04667654e-01)
      (5.6263, 9.89621579e-02)
      (5.8030, 9.35018291e-02)
      (5.9798, 8.82907060e-02)
      (6.1566, 8.33292201e-02)
      (6.3333, 7.86149803e-02)
      (6.5101, 7.41434189e-02)
      (6.6869, 6.99083220e-02)
      (6.8636, 6.59022621e-02)
      (7.0404, 6.21169494e-02)
      (7.2172, 5.85435157e-02)
      (7.3939, 5.51727419e-02)
      (7.5707, 5.19952390e-02)
      (7.7475, 4.90015913e-02)
      (7.9242, 4.61824675e-02)
      (8.1010, 4.35287068e-02)
      (8.2778, 4.10313831e-02)
      (8.4545, 3.86818532e-02)
      (8.6313, 3.64717898e-02)
      (8.8081, 3.43932048e-02)
      (8.9848, 3.24384629e-02)
      (9.1616, 3.06002877e-02)
      (9.3384, 2.88717635e-02)
      (9.5152, 2.72463315e-02)
      (9.6919, 2.57177834e-02)
      (9.8687, 2.42802524e-02)
      (10.0455, 2.29282020e-02)
      (10.2222, 2.16564138e-02)
      (10.3990, 2.04599744e-02)
      (10.5758, 1.93342616e-02)
      (10.7525, 1.82749307e-02)
      (10.9293, 1.72779004e-02)
      (11.1061, 1.63393391e-02)
      (11.2828, 1.54556510e-02)
      (11.4596, 1.46234634e-02)
      (11.6364, 1.38396135e-02)
      (11.8131, 1.31011364e-02)
      (11.9899, 1.24052534e-02)
      (12.1667, 1.17493601e-02)
      (12.3434, 1.11310168e-02)
      (12.5202, 1.05479373e-02)
      (12.6970, 9.99798018e-03)
      (12.8737, 9.47913896e-03)
      (13.0505, 8.98953412e-03)
      (13.2273, 8.52740475e-03)
      (13.4040, 8.09110106e-03)
      (13.5808, 7.67907723e-03)
      (13.7576, 7.28988475e-03)
      (13.9343, 6.92216612e-03)
      (14.1111, 6.57464900e-03)
      (14.2879, 6.24614070e-03)
      (14.4646, 5.93552301e-03)
      (14.6414, 5.64174741e-03)
      (14.8182, 5.36383055e-03)
      (14.9949, 5.10085002e-03)
      (15.1717, 4.85194044e-03)
      (15.3485, 4.61628975e-03)
      (15.5253, 4.39313580e-03)
      (15.7020, 4.18176312e-03)
      (15.8788, 3.98149992e-03)
      (16.0556, 3.79171528e-03)
      (16.2323, 3.61181657e-03)
      (16.4091, 3.44124695e-03)
      (16.5859, 3.27948313e-03)
      (16.7626, 3.12603322e-03)
      (16.9394, 2.98043475e-03)
      (17.1162, 2.84225279e-03)
      (17.2929, 2.71107823e-03)
      (17.4697, 2.58652616e-03)
      (17.6465, 2.46823434e-03)
      (17.8232, 2.35586177e-03)
      (18.0000, 2.24908742e-03)
    };
  \end{axis}
\end{tikzpicture}
            \caption{}
            \label{fig:EBSD-results-d}
        \end{subfigure}
        
        \caption{(a) A cylinder specimen and the portion at the central region of the specimen cut out for EBSD scans. (b) Columnar grain structure on XZ and YZ planes and cell structures at the build layer (XY plane). Shown here is a 1.5 mm square region; we assign unique colors to each grain orientation. (c) The PDF of the grain diameter. (d) The PDF of the grain aspect ratio.}
        \label{fig:EBSD-results}
    \end{figure}

    \begin{figure}
        \centering
        \begin{subfigure}[t]{0.25\textwidth}
            \centering
            \includegraphics[width=0.9\textwidth]{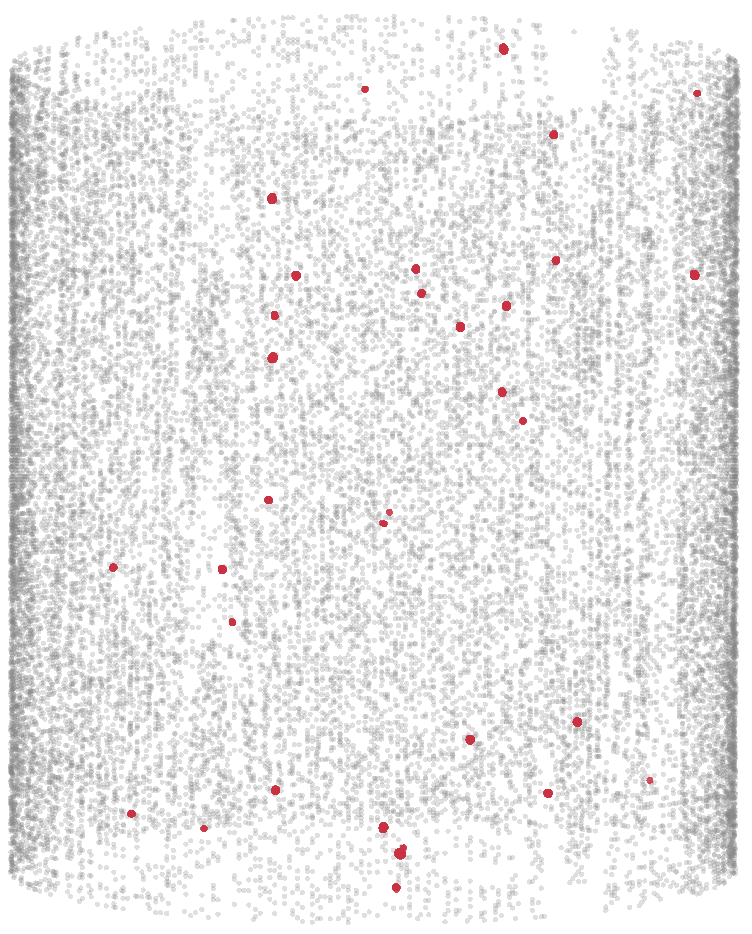}
            \caption{}
            \label{fig:microCT-1}
        \end{subfigure}
        \begin{subfigure}[t]{0.25\textwidth}
            \centering
            \includegraphics[width = 0.9\textwidth]{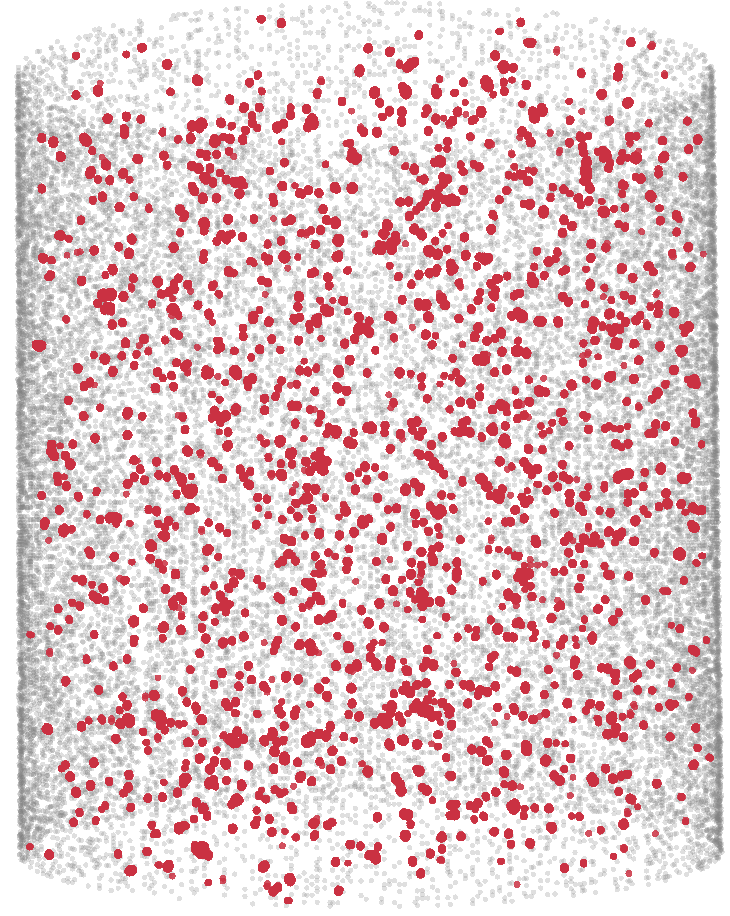}
            \caption{}
            \label{fig:microCT-4}
        \end{subfigure}
        \caption{Overview of defects scanned by microCT for central parts of specimens. Defects are marked in red and the outline of the specimen is marked in gray. Specimen (a) is printed with the recommended EOS M290 parameter Set 1 in Table~\ref{table:print-parameters}. Specimen (b) is printed with parameter Set 4 in Table~\ref{table:print-parameters} which causes lack-of-fusion defects.}
        \label{fig:microCT-defects}
        \hfill
    \end{figure}

\subsection{Design of numerical experiments}\label{section:num-setup}

We describe the numerical setup for the three simulations, namely the modified ELAS3D, MicroFract3D, and PRISMS-Plasticity, used to obtain the fatigue-affecting quantities. Informed by EBSD and microCT measurements, we generate artificial representative grain and defect structures in a 1.6-mm cube and refer to it as the global cube. The grains and defects are generated by sampling grain length, aspect ratio, and void diameter distributions informed by EBSD and microCT results corresponding to a set of print parameters. 

\subsubsection{Microstructure and void generation}

The simulation domain is a global cube with a side length of $1.6$ mm, which corresponds to the center region of a printed columnar specimen, see Figure~\ref{fig:block-maxima}. We partition it into 64 subcubes, each with a side length of $0.4$ mm. We show the global cube and the boundaries of the subcubes in Figure~\ref{fig:global-sub} and a specific set of defects in Figure~\ref{fig:generated-defects}.

    \begin{figure}[htbp]
        \centering
        \begin{subfigure}[t]{0.45\textwidth}
            \centering
            \includegraphics[width=\linewidth]{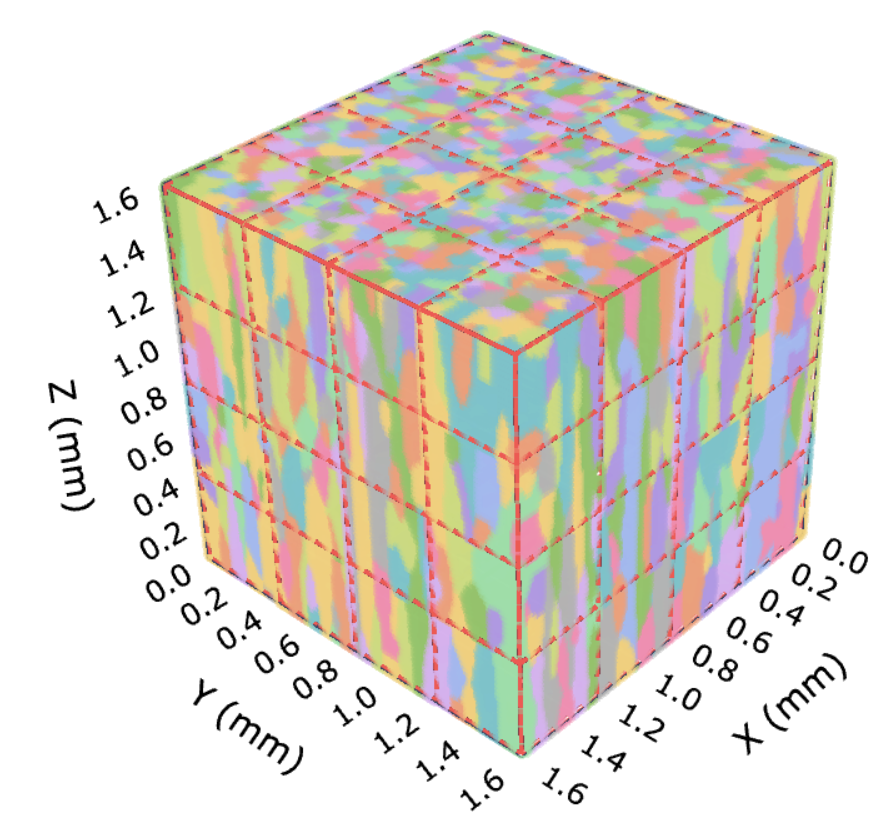}
            \caption{A 1.6-mm global cube, see Figure~\ref{fig:block-maxima}, subdivided into 64 subcubes. We visualize the grain microstructure on the surface. Grain lengths and aspect ratios follow lognormal distributions.}
            \label{fig:global-sub}
        \end{subfigure}
        \hfill
        \begin{subfigure}[t]{0.45\textwidth}
            \centering
            \includegraphics[width=\linewidth]{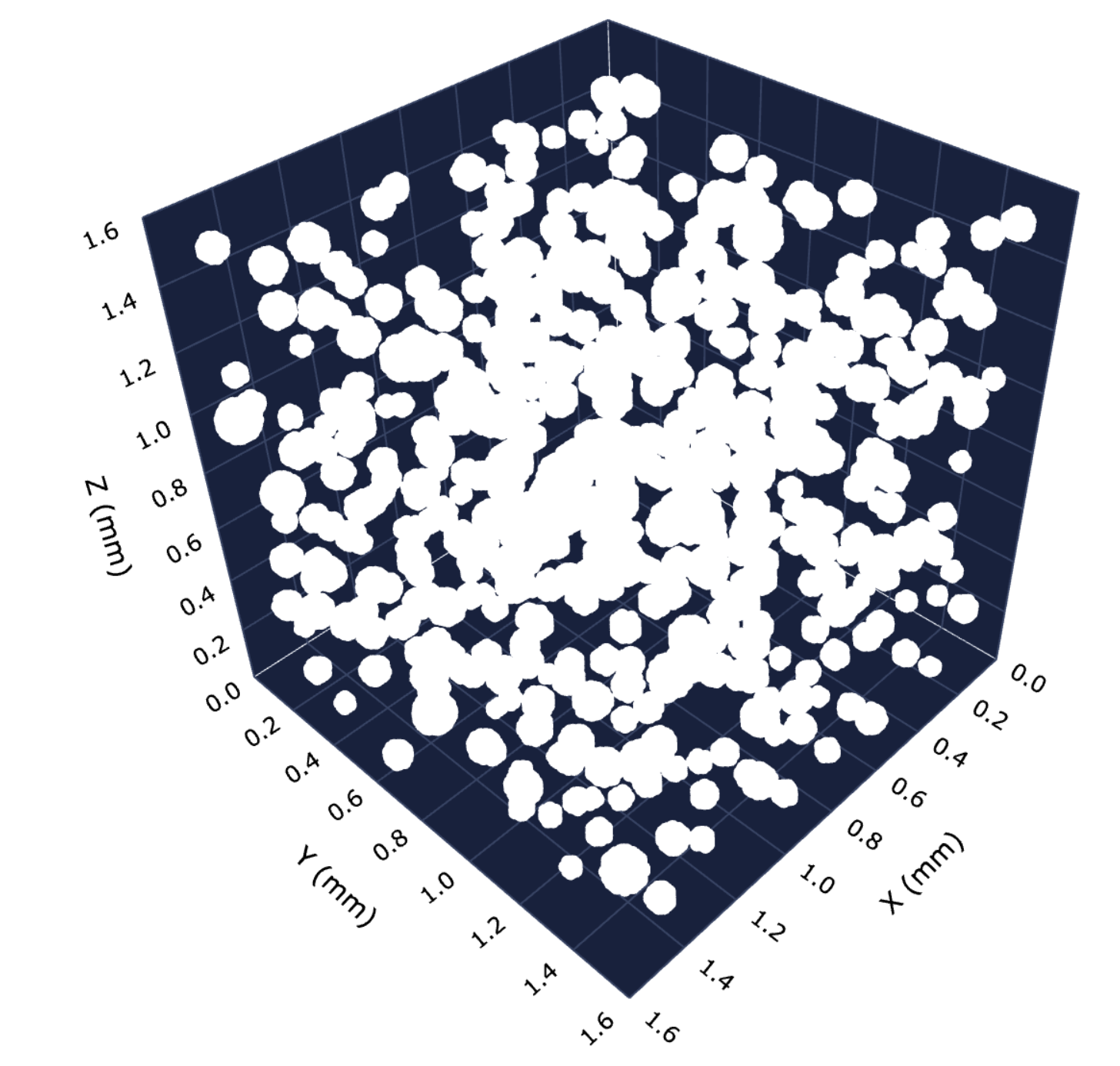}
            \caption{Spherical defects account for 2\% of the global cube. The defect diameters follow a truncated lognormal distribution. The minimum distance between any two defects is 30 $\mu$m.}
            \label{fig:generated-defects}
        \end{subfigure}
        \caption{Overview of the global cube, subcubes, and generated defects. Defects are enlarged for visualization.}
        \label{fig:global-sub-cube-microstructure-defects}
    \end{figure}

The true geometric characterizations of microstructures corresponding to different print parameter sets are obtained using EBSD and microCT scans. Informed by the measurement data, we address measurement uncertainty and sample-to-sample variations by generating artificial microstructures to conduct uncertainty quantification of fatigue initiation life. For metallic grains, we generate realizations of grain diameters and aspect ratios from lognormal distributions. For spherical voids, we generate the diameters of the spherical voids following a truncated lognormal
distribution. Given a target void volume fraction, we fill the global cube with metallic grains and voids with the same geometric~characteristics. 

We perform parametric studies by varying the grain diameter and void size to investigate their effects on the fatigue initiation life in Section~\ref{section:num-results-eff-grain} and Section~\ref{section:num-results-eff-void}. We summarize all simulation cases in Table~\ref{table:simulation-cases}. The grain and void properties in Cases 1 to 3 are generated synthetically, informed by the specimens produced using the EOS M290 recommended print parameter sets, to study the effects of grain size on the fatigue initiation life. The properties in Case 4 are EBSD and microCT scan data of a specimen printed with a lack-of-fusion print parameter set. Based on the Case 4 void properties, we design Case 5 to study the effects of void sizes on the fatigue initiation life.

\begin{table}[htbp]
    \centering
    \caption{Summary of simulation cases. In the grain size study (Cases 1 to 3), we vary the grain diameter while keeping the void parameters fixed. In the void size study (Cases 4 and 5), we change the void diameter range while keeping the grain parameters fixed.}
    \label{table:simulation-cases}
    {\footnotesize
    \begin{tabular}{l c c c c c}
        \textbf{Case} & \makecell{\textbf{Grain diameter} \\ \textbf{$\mu$, $\sigma$~($\mu$m)}} & \makecell{\textbf{Aspect ratio} \\ \textbf{$\mu$, $\sigma$}} & \makecell{\textbf{Void volume} \\ \textbf{fraction~(\%)}} & \makecell{\textbf{Void diameter} \\ \textbf{range ($\mu$m)}} & \makecell{\textbf{Void} \\ \textbf{spacing ($\mu$m)}}\\
        \hline
        Case 1 (synthetic) & 60, 18 & 5.0, 1.5 & 2.0 & [50, 100] & 30\\
        Case 2 (synthetic) & 70, 21 & 5.0, 1.5 & 2.0 & [50, 100] & 30\\
        Case 3 (synthetic) & 80, 24 & 5.0, 1.5 & 2.0 & [50, 100] & 30\\
        \hline
        Case 4 (printed) & 25, 15 & 3.3, 1.65 & 1.5 & [10, 80] & 50\\
        Case 5 (synthetic) & 25, 15 & 3.3, 1.65 & 1.5 & [50, 80] & 50\\
    \end{tabular}
    }
\end{table}

\subsubsection{Parameters for the three simulation models}\label{section:num-parameters}

We study the fatigue performance of the printed part under an externally applied strain of 0.3\% amplitude. In all three models, we make sure the boundary conditions satisfy this loading profile. As discussed in Section~\ref{section:two-resolution-scheme}, we use a two-resolution scheme to speed up the ELAS3D simulations. We set the voxel sizes to $0.010$ mm for the coarse mesh and $0.005$ mm for the fine mesh solver in order to capture the local elastic fields. In each subcube, there are $64{,}000$ nodes in the coarse grid and $512{,}000$ nodes in the fine grid. Since there are three translational degrees of freedom at each node, the dimension of $\bK$ in eq.~\eqref{eq:elas3D-linear-system} is $1{,}536{,}000 \times 1{,}536{,}000$ for the fine grid. The isotropic elastic properties of 316L stainless steel are defined by $E = 194.2$ GPa and $\nu = 0.2934$, yielding a plane-strain modulus of $E' = 212.5$~GPa.
        
In the MicroFract3D simulations, we discretize each subcube using a random 3D tetrahedral mesh of density $60^3$, with coordinates of void-matrix interfaces sampled using 6-neighbor connectivity and injected as mandatory~vertices. Note that each subcube has a unique mesh, which eliminates grid-orientation bias in the graph-cut solver, and the homogenized surface energy converges reliably. We restrict displacements at the top 5\% and bottom 5\% margins and seed crack initiation at the mid-plane along the left boundary. The data factor $U_0 = 0.03$ constrains the crack path near the mid-plane as discussed in Section~\ref{section:graph-cut-theory}, and the pairwise weights are scaled by a factor of $10^6$ for integer-based crack path optimization as discussed in Section~\ref{section:crack-path-simulation-surface-energy}. The physical surface energy is obtained by multiplying the normalized path resistance score by the base surface energy $\gamma_0 = 1.0\text{ J/m}^2$ and the voxel area. We set the intergranular boundary energy to $E_{GG} = 1.2$ and the grain-void interface energy to $E_{GP} = 0.1$ to attract the crack path. 
        
In PRISMS-Plasticity simulations, we use a uniform voxel size of 0.01 mm. The single-crystal elastic stiffness values for 316L stainless steel are $C_{11} = 204.6$~GPa, $C_{12} = 137.7$~GPa, and $C_{44} = 126.2$~GPa. Hardening parameters for the 12 FCC slip systems include Voce isotropic parameters ($s_0^\alpha = 95$~MPa, $h_0^\alpha = 350$~MPa, $s_\infty^\alpha = 200$~MPa, $a^\alpha = 2.25$) and Armstrong-Frederick kinematic parameters ($C_1^\alpha = 3000$~MPa, $C_2^\alpha = 100$~MPa). To avoid numerical instabilities, we model voids as elastic elements with a stiffness scaled by $f_{\mathrm{void}} = 10^{-4}$ relative to that of the metallic phase. We set fully reversed cyclic displacement control along the $z$-direction, with a displacement amplitude $\Delta u_z = -0.0012$ mm, a nominal strain amplitude $\varepsilon_a \approx 0.3\%$, and a quarter-cycle period $T_{1/4} = 1.0$ s, for two loading cycles simulated over 800 increments.

\subsection{Uncertainty in fatigue initiation life for recommended print parameters}\label{section:num-results}

We present the uncertainty propagation workflow using the Case 1 microstructure, i.e., using the EOS M290 recommended print parameter set. Using the outputs of the 64 subcube simulations, we first analyze the statistical distributions of the fatigue-affecting quantities and tabulate the parameters of the fitted generalized extreme value distribution, normal distribution, and lognormal distribution for the block extremes of $G$, $\gamma$, and $\omega^p$. We then evaluate the resulting theoretical probability density function of the fatigue initiation life $N$, see eq.~\eqref{eq:PDF-N-final}, under the assumed correlations. 

Using the ELAS3D solver, we perform linear elastic simulations on the 64 subcubes under uniaxial tension. For each subcube, we extract the stress field to calculate the stress intensity factors $K_I$ and the corresponding elastic energy release rate $G = K_I^2 / E'$ for all embedded voids, following the procedure in Section~\ref{section:simulation-LEFM-G}. The graph-theoretic MicroFract3D solver is used to compute the globally optimal 3D fracture surface and determine the surface energy along the resulting crack path as described in Section~\ref{section:simulation-MicroFract3D}. To capture plastic behavior, we run crystal plasticity finite element simulations using PRISMS-Plasticity. Over two loading cycles, the solver logs the local accumulated plastic work density at each increment, from which we compute the plastic work in the second full cycle and extract the peak fatigue indicator parameter $\omega^p$ as described in Section~\ref{section:simulation-CPFE}. 

Note that the PRISMS-Plasticity solver outputs the plastic work density in each voxel. The units of the output are MPa. We must multiply this value by 10 $\mu$m, the voxel thickness in the crystal plasticity finite element mesh. This ensures dimensional consistency when calculating the crack initiation life according to eq.~\eqref{eq:fatigue_life}.
        
\subsubsection{Fitted distributions of $G$, $\gamma$, $\omega^p$}

For the fatigue-affecting quantities $G$, $\gamma$, and $\omega^p$, we fit the generalized extreme value distribution, normal distribution, and lognormal distribution using block extremes obtained from the 64 subcubes. We then find the best probability distribution for each quantity using the Akaike Information Criterion and Bayesian Information Criterion as defined in eq.~\eqref{eq:AIC} and eq.~\ref{eq:BIC}, where smaller values are better. The estimated generalized extreme value distribution parameters are shape $\xi$, scale $\sigma$, and location $\mu$. The normal fit parameters are mean $\mu$ and variance $\sigma^2$. The lognormal fit parameters are shape $s$, location $\theta$, and scale $m$.

We calculate the elastic energy release rate $G$, surface energy $\gamma$, and fatigue indicator parameter $\omega^p$ values as described in Sections~\ref{section:simulation-LEFM-G}, \ref{section:simulation-MicroFract3D}, and \ref{section:simulation-CPFE}, respectively. We tabulate the peak value in each subcube and fit a generalized extreme value distribution, a normal distribution, and a lognormal distribution to the 64 block extremes. We compare the fitting quality using AIC and BIC in Table~\ref{table:fit-comparison-Case1} and visualize the fits in Figure~\ref{fig:CPFE-three-dists}. As listed in Table~\ref{table:fit-comparison-Case1}, for $G$ and $\gamma$, the normal distributions have the smallest BICs; for $\omega^p$, the lognormal distribution achieves the best BIC, but the normal distribution's BIC is not significantly different. Therefore, the normal distribution is selected for all three quantities for consistency with the analytical PDF derivation in Section~\ref{section:normal-log-normal}.

\begin{table}[h]
    \centering
    \caption{Case 1: comparison of probability distribution fits for $G$, $\gamma$, and $\omega^p$. Bold values indicate the best (lowest) BIC.}
    \label{table:fit-comparison-Case1}
    {\footnotesize
    \begin{tabular}{l l l c c}
        \textbf{Quantity} & \textbf{Distribution} & \textbf{Parameters} & \textbf{AIC} & \textbf{BIC} \\
        \hline
        \multirow{3}{*}{$G$}
        & GEV ($\xi=-0.264$) & $\mu=2.18$, $\sigma=0.27$ & 24.21 & 30.68 \\
        & Normal & $\mu_N=2.28$, $\sigma_N=0.28$ & 24.43 & \textbf{28.75} \\
        & Lognormal & $s=0.14$, $\theta=0$, $m=2.04$ & 25.55 & 32.03 \\
        \hline
        \multirow{3}{*}{$\gamma$}
        & GEV ($\xi=-0.383$) & $\mu=5.21$, $\sigma=0.096$ & $-120.66$ & $-114.19$ \\
        & Normal & $\mu_N=5.24$, $\sigma_N=0.091$ & $-120.71$ & $\mathbf{-116.39}$ \\
        & Lognormal & $s=5.58{\times}10^{-6}$, $\theta=1.64{\times}10^{4}$, $m=1.64{\times}10^{4}$ & $-118.71$ & $-112.23$ \\
        \hline
        \multirow{3}{*}{$\omega^p$}
        & GEV ($\xi=0.778$) & $\mu=4.16{\times}10^{-4}$, $\sigma=1.05{\times}10^{-4}$ & $-996.50$ & $-990.02$ \\
        & Normal & $\mu_N=4.48{\times}10^{-4}$, $\sigma_N=6.47{\times}10^{-5}$ & $-1049.03$ & ${-1044.71}$ \\
        & Lognormal & $s=0.23$, $\theta=1.73{\times}10^{-4}$, $m=2.67{\times}10^{-4}$ & $-1053.05$ & $\mathbf{-1046.57}$ \\
    \end{tabular}
    }
\end{table}

\begin{figure}
    \centering
    \input{fig-three-dists-combined}
    \caption{Case 1: block-maxima distribution fits for the three fatigue-affecting quantities. (a)~The peak elastic energy release rate $G$ values in all 64 subcubes; the normal distribution (BIC\,=\,28.75) is preferred over GEV-Weibull ($\xi=-0.264$). (b)~The surface energy $\gamma$ along the minimum-energy crack path; the normal distribution (BIC\,=\,$-116.39$) is preferred over GEV-Weibull ($\xi=-0.383$). (c)~The peak FIP $\omega^p$ values; the lognormal distribution provides the best BIC, but the normal distribution (BIC\,=\,$-1044.71$) is selected for consistency with the analytical PDF derivation.}
    \label{fig:CPFE-three-dists}
\end{figure}

\subsubsection{Theoretical probability density function of fatigue initiation life $N$}

Since all three simulations use the same set of microstructures, including metallic grains and voids, the correlations among the fatigue-affecting quantities should be high, so we choose the following correlation coefficients: $\rho_{G,\gamma} = 0.8$, $\rho_{G,\omega^p} = 0.7$, and $\rho_{\gamma,\omega^p} = 0.6$.
    
We obtain the probability density function of fatigue initiation life according to Corollary~\ref{cor:PDF-N-b} with $b=0.95$ using these correlation coefficients and the parameters listed in Table~\ref{table:fit-comparison-Case1}. We plot the PDF in Figure~\ref{fig:fatigue-life-pdf}. The resulting distribution provides valuable insights about the fatigue performance of the printed part under the cyclically applied strain of 0.3\% amplitude. When scheduling maintenance, rather than relying on a point estimate for the fatigue initiation life combined with a conservative empirical safety factor, decision makers can use the PDF values to make informed scheduling choices. Because the PDF of fatigue initiation life $N$ is right-skewed, the expected fatigue initiation life (about $31{,}400$ cycles) is greater than the mode (about $29{,}300$ cycles). In this scenario, compared to scaling down the mean life by a factor of 0.8, using the PDF allows the inspection interval to be scheduled about $4{,}200$ cycles later. This probabilistic approach not only improves decision-making accuracy but also reduces operational costs by saving material and labor.
    \begin{figure}
        \centering
        \begin{tikzpicture}
  \begin{axis}[
    width=0.85\textwidth,
    height=6cm,
    axis lines=left,
    xlabel={Fatigue initiation life $N$ (cycles)},
    ylabel={Probability density $p_N(n)$},
    xmin=15000, xmax=65000,
    ymin=0, ymax=0.0001,
    scaled x ticks=real:1000,
    scaled y ticks=real:1e-4,
    xtick scale label code/.code={$\times 10^3$},
    ytick scale label code/.code={$\times 10^{-4}$},
    grid=both,
    grid style={line width=.1pt, draw=gray!10},
    major grid style={line width=.2pt, draw=gray!20},
    axis line style={lightgray},
    tick label style={font=\footnotesize},
    label style={font=\small},
    every tick/.style={lightgray},
    legend style={at={(0.95,0.95)}, anchor=north east, font=\footnotesize, draw=none, fill=none}
  ]
    \addplot[
      draw=none,
      forget plot
    ] coordinates {
      (18083.3007, 7.7823454338e-08)
      (18770.7510, 2.5766823271e-07)
      (19458.2014, 7.1674219628e-07)
      (20145.6517, 1.7153119271e-06)
      (20833.1021, 3.6037591371e-06)
      (21520.5525, 6.7616438264e-06)
      (22208.0028, 1.1497374648e-05)
      (22895.4532, 1.7941255012e-05)
      (23582.9036, 2.5972252526e-05)
      (24270.3539, 3.5206300214e-05)
      (24957.8043, 4.5049653046e-05)
      (25645.2546, 5.4798157676e-05)
      (26332.7050, 6.3751745709e-05)
      (27020.1554, 7.1314863031e-05)
      (27707.6057, 7.7063735282e-05)
      (28395.0561, 8.0774187014e-05)
      (29082.5064, 8.2414311192e-05)
      (29769.9568, 8.2112362453e-05)
      (30457.4072, 8.0111855005e-05)
      (31144.8575, 7.6724297866e-05)
      (31832.3079, 7.2286909525e-05)
      (32519.7582, 6.7129329838e-05)
      (33207.2086, 6.1550578038e-05)
      (33894.6590, 5.5805604158e-05)
      (34582.1093, 5.0099732990e-05)
      (35269.5597, 4.4588925007e-05)
      (35957.0101, 3.9383850473e-05)
      (36644.4604, 3.4556087248e-05)
      (37331.9108, 3.0145157353e-05)
      (38019.3611, 2.6165515133e-05)
      (38706.8115, 2.2612939563e-05)
      (39394.2619, 1.9470045940e-05)
      (40081.7122, 1.6710818475e-05)
      (40769.1626, 1.4304186193e-05)
      (41456.6129, 1.2216734327e-05)
      (42144.0633, 1.0414676528e-05)
      (42831.5137, 8.8652218363e-06)
      (43518.9640, 7.5374640443e-06)
      (44206.4144, 6.4029067844e-06)
      (44893.8647, 5.4357200675e-06)
      (45581.3151, 4.6128060224e-06)
      (46268.7655, 3.9137349521e-06)
      (46956.2158, 3.3205983390e-06)
      (47643.6662, 2.8178133666e-06)
      (48331.1165, 2.3919038241e-06)
      (49018.5669, 2.0312746726e-06)
      (49706.0173, 1.7259917851e-06)
      (50393.4676, 1.4675740824e-06)
      (51080.9180, 1.2488021869e-06)
      (51768.3684, 1.0635455330e-06)
      (52455.8187, 9.0660838552e-07)
      (53143.2691, 7.7359424998e-07)
      (53830.7194, 6.6078756105e-07)
      (54518.1698, 5.6505121208e-07)
      (55205.6202, 4.8373834118e-07)
      (55893.0705, 4.1461676931e-07)
      (56580.5209, 3.5580454237e-07)
      (57267.9712, 3.0571513231e-07)
      (57955.4216, 2.6301098043e-07)
      (58642.8720, 2.2656420310e-07)
      (58642.8720, 0)
      (18083.3007, 0)
    };

    \addplot[
      color=blue!80!black,
      thick,
      smooth
    ] coordinates {
      (18083.3007, 7.7823454338e-08)
      (18770.7510, 2.5766823271e-07)
      (19458.2014, 7.1674219628e-07)
      (20145.6517, 1.7153119271e-06)
      (20833.1021, 3.6037591371e-06)
      (21520.5525, 6.7616438264e-06)
      (22208.0028, 1.1497374648e-05)
      (22895.4532, 1.7941255012e-05)
      (23582.9036, 2.5972252526e-05)
      (24270.3539, 3.5206300214e-05)
      (24957.8043, 4.5049653046e-05)
      (25645.2546, 5.4798157676e-05)
      (26332.7050, 6.3751745709e-05)
      (27020.1554, 7.1314863031e-05)
      (27707.6057, 7.7063735282e-05)
      (28395.0561, 8.0774187014e-05)
      (29082.5064, 8.2414311192e-05)
      (29769.9568, 8.2112362453e-05)
      (30457.4072, 8.0111855005e-05)
      (31144.8575, 7.6724297866e-05)
      (31832.3079, 7.2286909525e-05)
      (32519.7582, 6.7129329838e-05)
      (33207.2086, 6.1550578038e-05)
      (33894.6590, 5.5805604158e-05)
      (34582.1093, 5.0099732990e-05)
      (35269.5597, 4.4588925007e-05)
      (35957.0101, 3.9383850473e-05)
      (36644.4604, 3.4556087248e-05)
      (37331.9108, 3.0145157353e-05)
      (38019.3611, 2.6165515133e-05)
      (38706.8115, 2.2612939563e-05)
      (39394.2619, 1.9470045940e-05)
      (40081.7122, 1.6710818475e-05)
      (40769.1626, 1.4304186193e-05)
      (41456.6129, 1.2216734327e-05)
      (42144.0633, 1.0414676528e-05)
      (42831.5137, 8.8652218363e-06)
      (43518.9640, 7.5374640443e-06)
      (44206.4144, 6.4029067844e-06)
      (44893.8647, 5.4357200675e-06)
      (45581.3151, 4.6128060224e-06)
      (46268.7655, 3.9137349521e-06)
      (46956.2158, 3.3205983390e-06)
      (47643.6662, 2.8178133666e-06)
      (48331.1165, 2.3919038241e-06)
      (49018.5669, 2.0312746726e-06)
      (49706.0173, 1.7259917851e-06)
      (50393.4676, 1.4675740824e-06)
      (51080.9180, 1.2488021869e-06)
      (51768.3684, 1.0635455330e-06)
      (52455.8187, 9.0660838552e-07)
      (53143.2691, 7.7359424998e-07)
      (53830.7194, 6.6078756105e-07)
      (54518.1698, 5.6505121208e-07)
      (55205.6202, 4.8373834118e-07)
      (55893.0705, 4.1461676931e-07)
      (56580.5209, 3.5580454237e-07)
      (57267.9712, 3.0571513231e-07)
      (57955.4216, 2.6301098043e-07)
      (58642.8720, 2.2656420310e-07)
    };
    \addlegendentry{PDF of $N$}

    \draw[dashed, red, thick] (axis cs:31443, 0) -- (axis cs:31443, 0.000074)
      node[right=2pt, pos=1.05, font=\footnotesize, text=red, fill=white, inner sep=1pt] {Mean $\approx$ 31.4k};

    \draw[dotted, black, thick] (axis cs:29311, 0) -- (axis cs:29311, 0.000082)
      node[left=2pt, pos=1.05, font=\footnotesize, text=black] {Mode $\approx$ 29.3k};

  \end{axis}
\end{tikzpicture}
        \caption{Case 1: the PDF of the fatigue initiation life $N$ calculated using the analytical ratio distribution of correlated normal variables, see Section~\ref{sec: fatigue-life}. The parameters are fitted from the block extremes of the 64 subcubes, with assumed correlation coefficients $\rho_{G,\gamma} = 0.8$, $\rho_{G,\omega^p} = 0.7$, and $\rho_{\gamma,\omega^p} = 0.6$.}
        \label{fig:fatigue-life-pdf}
    \end{figure}
    
\subsection{Effects of grain size on fatigue initiation life}\label{section:num-results-eff-grain}

In powder bed fusion, local thermal histories affect grain size distributions. In addition, when characterizing grain geometries, measurement noise from EBSD may also introduce uncertainties. Therefore, it is important to study how grain-level variations propagate to the fatigue initiation life. We investigate the effect of grain diameter on the fatigue initiation life by comparing Cases 1, 2, and 3 with varying lognormal distributions of grain diameters while keeping all void parameters fixed, as listed in Table~\ref{table:simulation-cases}. The void volume fraction is kept at $2.0\%$. We perform the same suite of ELAS3D, MicroFract3D, and PRISMS-Plasticity simulations and follow the same distribution-fitting procedure described in Section~\ref{section:num-results}.

\subsubsection{Fitted distributions of $G$, $\gamma$, $\omega^p$}

We summarize the Akaike Information Criterion and Bayesian Information Criterion values calculated using eq.~\eqref{eq:AIC} and eq.~\eqref{eq:BIC} for all distribution fits in Tables~\ref{table:fit-summary-d70} and~\ref{table:fit-summary-d80}. For $G$ in both cases, the generalized extreme value distribution provides the best BIC value, but the normal distribution's BIC value is comparable. For $\gamma$, the normal distribution provides the best fit in both cases. For $\omega^p$, the best fit varies: lognormal for $d = 70\ \mu$m (Case 2) and normal for $d = 80\ \mu$m (Case 3). We select the normal distribution for all three quantities for consistency with the analytical PDF derivation in Section~\ref{section:normal-log-normal}, noting that normal distributions remain competitive fits.

\begin{table}[htbp]
    \centering
    \caption{Case 2: comparison of probability distribution fits for $G$, $\gamma$, and $\omega^p$. Bold values indicate the best (lowest) BIC.}
    \label{table:fit-summary-d70}
    {\footnotesize
    \begin{tabular}{l l l c c}
        \textbf{Quantity} & \textbf{Distribution} & \textbf{Parameters} & \textbf{AIC} & \textbf{BIC} \\
        \hline
        \multirow{3}{*}{$G$}
        & GEV ($\xi=-0.494$) & $\mu=2.27$, $\sigma=0.302$ & 19.59 & \textbf{26.07} \\
        & Normal & $\mu_N=2.33$, $\sigma_N=0.280$ & 22.52 & 26.83 \\
        & Lognormal & $s=0.124$, $\theta=0$, $m=2.32$ & 24.52 & 30.99 \\
        \hline
        \multirow{3}{*}{$\gamma$}
        & GEV ($\xi=-0.419$) & $\mu=5.24$, $\sigma=0.101$ & $-117.52$ & $-111.05$ \\
        & Normal & $\mu_N=5.27$, $\sigma_N=0.0946$ & $-116.17$ & $\mathbf{-111.85}$ \\
        & Lognormal & $s=0.0181$, $\theta=0$, $m=5.26$ & $-113.48$ & $-107.01$ \\
        \hline
        \multirow{3}{*}{$\omega^p$}
        & GEV ($\xi=0.787$) & $\mu=4.05{\times}10^{-4}$, $\sigma=9.82{\times}10^{-5}$ & $-1218.51$ & $-1212.18$ \\
        & Normal & $\mu_N=4.56{\times}10^{-4}$, $\sigma_N=8.74{\times}10^{-5}$ & $-1243.95$ & $-1239.73$ \\
        & Lognormal & $s=0.184$, $\theta=0$, $m=4.48{\times}10^{-4}$ & $-1249.12$ & $\mathbf{-1242.79}$ \\
    \end{tabular}
    }
\end{table}

\begin{table}[htbp]
    \centering
    \caption{Case 3: comparison of probability distribution fits for $G$, $\gamma$, and $\omega^p$. Bold values indicate the best (lowest) BIC.}
    \label{table:fit-summary-d80}
    {\footnotesize
    \begin{tabular}{l l l c c}
        \textbf{Quantity} & \textbf{Distribution} & \textbf{Parameters} & \textbf{AIC} & \textbf{BIC} \\
        \hline
        \multirow{3}{*}{$G$}
        & GEV ($\xi=-0.532$) & $\mu=2.28$, $\sigma=0.349$ & 34.53 & \textbf{41.01} \\
        & Normal & $\mu_N=2.35$, $\sigma_N=0.322$ & 40.42 & 44.73 \\
        & Lognormal & $s=0.145$, $\theta=0$, $m=2.33$ & 48.64 & 55.11 \\
        \hline
        \multirow{3}{*}{$\gamma$}
        & GEV ($\xi=-0.335$) & $\mu=5.24$, $\sigma=0.0954$ & $-119.06$ & $-112.58$ \\
        & Normal & $\mu_N=5.27$, $\sigma_N=0.0911$ & $-120.98$ & $\mathbf{-116.67}$ \\
        & Lognormal & $s=0.0174$, $\theta=0$, $m=5.27$ & $-118.53$ & $-112.05$ \\
        \hline
        \multirow{3}{*}{$\omega^p$}
        & GEV ($\xi=0.787$) & $\mu=4.28{\times}10^{-4}$, $\sigma=1.15{\times}10^{-4}$ & $-1202.81$ & $-1196.47$ \\
        & Normal & $\mu_N=4.80{\times}10^{-4}$, $\sigma_N=8.56{\times}10^{-5}$ & $-1246.47$ & $\mathbf{-1242.25}$ \\
        & Lognormal & $s=0.179$, $\theta=0$, $m=4.72{\times}10^{-4}$ & $-1246.17$ & $-1239.84$ \\
    \end{tabular}
    }
\end{table}

\subsubsection{Fatigue initiation life comparison}

We use the fitted normal distribution parameters and the correlation coefficients $\rho_{G,\gamma} = 0.8$, $\rho_{G,\omega^p} = 0.7$, and $\rho_{\gamma,\omega^p} = 0.6$ to evaluate the analytical PDF of $N$ for each grain size using Corollary~\ref{cor:PDF-N-b} with $b=0.95$. Figure~\ref{fig:fatigue-life-grain-comparison} shows the overlay of the three analytical PDFs. As the mean grain diameter increases from 60 to 80~$\mu$m, the fatigue initiation life distribution shifts slightly to the left, indicating a marginal decrease in fatigue life. This is consistent with the observation that larger grains lead to slightly higher energy release rates $G$ and fatigue indicator parameters $\omega^p$, while the surface energy $\gamma$ remains largely unchanged.

    \begin{figure}[htbp]
        \centering
        \begin{tikzpicture}
  \begin{axis}[
    width=0.85\textwidth,
    height=6cm,
    axis lines=left,
    xlabel={Fatigue initiation life $N$ (cycles)},
    ylabel={Probability density $p_N(n)$},
    ytick scale label code/.code={$\times 10^{-4}$},
    xmin=15000, xmax=65000,
    ymin=0, ymax=10e-05,
    scaled x ticks=real:1000,
    xtick scale label code/.code={$\times 10^3$},
    grid=both,
    grid style={line width=.1pt, draw=gray!10},
    major grid style={line width=.2pt, draw=gray!20},
    axis line style={lightgray},
    tick label style={font=\footnotesize},
    label style={font=\small},
    every tick/.style={lightgray},
    legend style={at={(0.95,0.95)}, anchor=north east, font=\footnotesize, draw=none, fill=none}
  ]
    \addplot[color=blue!80!black, thick, smooth] coordinates {
      (18083.3007, 7.7823454338e-08)
      (18770.7510, 2.5766823271e-07)
      (19458.2014, 7.1674219628e-07)
      (20145.6517, 1.7153119271e-06)
      (20833.1021, 3.6037591371e-06)
      (21520.5525, 6.7616438264e-06)
      (22208.0028, 1.1497374648e-05)
      (22895.4532, 1.7941255012e-05)
      (23582.9036, 2.5972252526e-05)
      (24270.3539, 3.5206300214e-05)
      (24957.8043, 4.5049653046e-05)
      (25645.2546, 5.4798157676e-05)
      (26332.7050, 6.3751745709e-05)
      (27020.1554, 7.1314863031e-05)
      (27707.6057, 7.7063735282e-05)
      (28395.0561, 8.0774187014e-05)
      (29082.5064, 8.2414311192e-05)
      (29769.9568, 8.2112362453e-05)
      (30457.4072, 8.0111855005e-05)
      (31144.8575, 7.6724297866e-05)
      (31832.3079, 7.2286909525e-05)
      (32519.7582, 6.7129329838e-05)
      (33207.2086, 6.1550578038e-05)
      (33894.6590, 5.5805604158e-05)
      (34582.1093, 5.0099732990e-05)
      (35269.5597, 4.4588925007e-05)
      (35957.0101, 3.9383850473e-05)
      (36644.4604, 3.4556087248e-05)
      (37331.9108, 3.0145157353e-05)
      (38019.3611, 2.6165515133e-05)
      (38706.8115, 2.2612939563e-05)
      (39394.2619, 1.9470045940e-05)
      (40081.7122, 1.6710818475e-05)
      (40769.1626, 1.4304186193e-05)
      (41456.6129, 1.2216734327e-05)
      (42144.0633, 1.0414676528e-05)
      (42831.5137, 8.8652218363e-06)
      (43518.9640, 7.5374640443e-06)
      (44206.4144, 6.4029067844e-06)
      (44893.8647, 5.4357200675e-06)
      (45581.3151, 4.6128060224e-06)
      (46268.7655, 3.9137349521e-06)
      (46956.2158, 3.3205983390e-06)
      (47643.6662, 2.8178133666e-06)
      (48331.1165, 2.3919038241e-06)
      (49018.5669, 2.0312746726e-06)
      (49706.0173, 1.7259917851e-06)
      (50393.4676, 1.4675740824e-06)
      (51080.9180, 1.2488021869e-06)
      (51768.3684, 1.0635455330e-06)
      (52455.8187, 9.0660838552e-07)
      (53143.2691, 7.7359424998e-07)
      (53830.7194, 6.6078756105e-07)
      (54518.1698, 5.6505121208e-07)
      (55205.6202, 4.8373834118e-07)
      (55893.0705, 4.1461676931e-07)
      (56580.5209, 3.5580454237e-07)
      (57267.9712, 3.0571513231e-07)
      (57955.4216, 2.6301098043e-07)
      (58642.8720, 2.2656420310e-07)
    };
    \addlegendentry{Case 1 (60 $\mu$m)}

    \addplot[color=red!80!black, thick, smooth, dashed] coordinates {
      (16287.0164, 1.6778756999e-07)
      (17298.8586, 7.3916308263e-07)
      (18310.7007, 2.3551094360e-06)
      (19322.5429, 5.8123221393e-06)
      (20334.3851, 1.1709341758e-05)
      (21346.2272, 2.0052472287e-05)
      (22358.0694, 3.0130153112e-05)
      (23369.9115, 4.0724634483e-05)
      (24381.7537, 5.0504286280e-05)
      (25393.5958, 5.8382881193e-05)
      (26405.4380, 6.3717695612e-05)
      (27417.2802, 6.6333251209e-05)
      (28429.1223, 6.6427384552e-05)
      (29440.9645, 6.4430805648e-05)
      (30452.8066, 6.0873151347e-05)
      (31464.6488, 5.6282224578e-05)
      (32476.4909, 5.1122310998e-05)
      (33488.3331, 4.5765812798e-05)
      (34500.1753, 4.0488291988e-05)
      (35512.0174, 3.5477320478e-05)
      (36523.8596, 3.0847733145e-05)
      (37535.7017, 2.6658403886e-05)
      (38547.5439, 2.2927781649e-05)
      (39559.3860, 1.9646925508e-05)
      (40571.2282, 1.6789714003e-05)
      (41583.0704, 1.4320413380e-05)
      (42594.9125, 1.2199015667e-05)
      (43606.7547, 1.0384815594e-05)
      (44618.5968, 8.8386652314e-06)
      (45630.4390, 7.5242772599e-06)
      (46642.2811, 6.4088706213e-06)
      (47654.1233, 5.4633804220e-06)
      (48665.9654, 4.6623934490e-06)
      (49677.8076, 3.9839227396e-06)
      (50689.6498, 3.4090983637e-06)
      (51701.4919, 2.9218250555e-06)
      (52713.3341, 2.5084384915e-06)
      (53725.1762, 2.1573790021e-06)
      (54737.0184, 1.8588927639e-06)
      (55748.8605, 1.6047648330e-06)
      (56760.7027, 1.3880848137e-06)
      (57772.5449, 1.2030438464e-06)
      (58784.3870, 1.0447604514e-06)
      (59796.2292, 9.0913223739e-07)
      (60808.0713, 7.9271035056e-07)
      (61819.9135, 6.9259363720e-07)
      (62831.7556, 6.0633971323e-07)
      (63843.5978, 5.3189042015e-07)
      (64855.4400, 4.6750944696e-07)
      (65867.2821, 4.1173019298e-07)
      (66879.1243, 3.6331222033e-07)
      (67890.9664, 3.2120489256e-07)
      (68902.8086, 2.8451701325e-07)
      (69914.6507, 2.5249146790e-07)
      (70926.4929, 2.2448403403e-07)
      (71938.3351, 1.9994566254e-07)
      (72950.1772, 1.7840764909e-07)
      (73962.0194, 1.5946921225e-07)
      (74973.8615, 1.4278707657e-07)
      (75985.7037, 1.2806672663e-07)
    };
    \addlegendentry{Case 2 (70 $\mu$m)}

    \addplot[color=green!60!black, thick, smooth, dotted] coordinates {
      (15572.3201, 1.7345291570e-07)
      (16438.8680, 6.6645298774e-07)
      (17305.4158, 1.9727227732e-06)
      (18171.9637, 4.7218899487e-06)
      (19038.5115, 9.4976635815e-06)
      (19905.0594, 1.6554729712e-05)
      (20771.6072, 2.5632312699e-05)
      (21638.1551, 3.5969600607e-05)
      (22504.7029, 4.6503037858e-05)
      (23371.2508, 5.6138838023e-05)
      (24237.7987, 6.3987856262e-05)
      (25104.3465, 6.9498437218e-05)
      (25970.8944, 7.2480091940e-05)
      (26837.4422, 7.3047696993e-05)
      (27703.9901, 7.1526927673e-05)
      (28570.5379, 6.8355118693e-05)
      (29437.0858, 6.3998471804e-05)
      (30303.6336, 5.8893969166e-05)
      (31170.1815, 5.3415715787e-05)
      (32036.7293, 4.7860990782e-05)
      (32903.2772, 4.2449965489e-05)
      (33769.8250, 3.7333536015e-05)
      (34636.3729, 3.2604977819e-05)
      (35502.9208, 2.8312510643e-05)
      (36369.4686, 2.4471042167e-05)
      (37236.0165, 2.1072232300e-05)
      (38102.5643, 1.8092598018e-05)
      (38969.1122, 1.5499719736e-05)
      (39835.6600, 1.3256782983e-05)
      (40702.2079, 1.1325753143e-05)
      (41568.7557, 9.6694816975e-06)
      (42435.3036, 8.2530105755e-06)
      (43301.8514, 7.0442966709e-06)
      (44168.3993, 6.0145324666e-06)
      (45034.9471, 5.1381968775e-06)
      (45901.4950, 4.3929351939e-06)
      (46768.0429, 3.7593388330e-06)
      (47634.5907, 3.2206739107e-06)
      (48501.1386, 2.7625914411e-06)
      (49367.6864, 2.3728402048e-06)
      (50234.2343, 2.0409949886e-06)
      (51100.7821, 1.7582071510e-06)
      (51967.3300, 1.5169806173e-06)
      (52833.8778, 1.3109739198e-06)
      (53700.4257, 1.1348273712e-06)
      (54566.9735, 9.8401358601e-07)
      (55433.5214, 8.5470913419e-07)
      (56300.0692, 7.4368496555e-07)
      (57166.6171, 6.4821327643e-07)
      (58033.1650, 5.6598862592e-07)
      (58899.7128, 4.9506130061e-07)
      (59766.2607, 4.3378114045e-07)
      (60632.8085, 3.8075025462e-07)
      (61499.3564, 3.3478326232e-07)
      (62365.9042, 2.9487388411e-07)
      (63232.4521, 2.6016687951e-07)
      (64098.9999, 2.2993447807e-07)
      (64965.5478, 2.0355658179e-07)
      (65832.0956, 1.8050413026e-07)
      (66698.6435, 1.6032511651e-07)
    };
    \addlegendentry{Case 3 (80 $\mu$m)}

    \draw[solid, blue!80!black, thick] (axis cs:31443, 0) -- (axis cs:31443, 7.4e-05)
      node[above right, font=\small, text=blue!80!black] {Mean $\approx$ 31.4k};
    \draw[dashed, red!80!black, thick] (axis cs:31514, 0) -- (axis cs:31514, 5.7e-05)
      node[above right, font=\small, text=red!80!black, yshift=-1.6cm] {31.5k};
    \draw[dotted, green!60!black, thick] (axis cs:29619, 0) -- (axis cs:29619, 6.4e-05)
      node[above, font=\small, text=green!60!black, yshift=0.2cm] {29.6k};

  \end{axis}
\end{tikzpicture}
        \caption{The effects of grain diameter on fatigue initiation life $N$: comparison of the analytical PDF of the fatigue initiation life $N$ for Cases 1, 2, and 3 listed in Table~\ref{table:simulation-cases}. As the mean grain diameter increases from Case 1 (mean grain diameter 60 $\mu$m, see Table~\ref{table:simulation-cases}) to Case 3 (mean grain diameter 80 $\mu$m), the fatigue initiation life distribution shifts slightly to the left. The correlation coefficients $\rho_{G,\gamma} = 0.8$, $\rho_{G,\omega^p} = 0.7$, $\rho_{\gamma,\omega^p} = 0.6$ are used for all cases. The mean fatigue life calculated using each PDF is approximately $31{,}400$, $31{,}500$, and $29{,}600$ cycles for Case 1, 2, and 3, respectively.}
        \label{fig:fatigue-life-grain-comparison}
    \end{figure}

We also directly evaluate $N_i^{0.95} = (2\gamma_i - G_i)/\omega^p_i$ for each subcube using the actual simulation outputs, without fitting distributions. We calculate the per-subcube kernel density estimation (KDE) with the gaussian\_kde function in SciPy v1.18.0~\cite{2020SciPy-NMeth}, where we use Scott's rule to determine the bandwidth factor. We compare the KDE of the per-subcube $N$ values with the analytical PDF for each grain size for all three cases in this study and show the comparison for Case 2 in Figure~\ref{fig:direct-N-grain}, whose kernel bandwidth is approximately $2{,}700$ cycles. For all cases, the per-subcube kernel density estimations and the analytical PDF are in close agreement, validating the assumptions underlying the distribution-fitting approach.

    \begin{figure}[htbp]
        \centering
        \begin{tikzpicture}
  \begin{axis}[
    width=0.85\textwidth,
    height=6cm,
    axis lines=left,
    xlabel={Fatigue initiation life $N$ (cycles)},
    ylabel={Probability density $p_N(n)$},
    ytick scale label code/.code={$\times 10^{-4}$},
    xmin=15000, xmax=65000,
    ymin=0, ymax=10e-05,
    scaled x ticks=real:1000,
    xtick scale label code/.code={$\times 10^3$},
    grid=both,
    grid style={line width=.1pt, draw=gray!10},
    major grid style={line width=.2pt, draw=gray!20},
    axis line style={lightgray},
    tick label style={font=\footnotesize},
    label style={font=\small},
    every tick/.style={lightgray},
    legend style={at={(0.95,0.95)}, anchor=north east, font=\footnotesize, draw=none, fill=none}
  ]
    \addplot[color=red!80!black, thick, smooth] coordinates {
      (16287.0164, 1.6778756999e-07)
      (17298.8586, 7.3916308263e-07)
      (18310.7007, 2.3551094360e-06)
      (19322.5429, 5.8123221393e-06)
      (20334.3851, 1.1709341758e-05)
      (21346.2272, 2.0052472287e-05)
      (22358.0694, 3.0130153112e-05)
      (23369.9115, 4.0724634483e-05)
      (24381.7537, 5.0504286280e-05)
      (25393.5958, 5.8382881193e-05)
      (26405.4380, 6.3717695612e-05)
      (27417.2802, 6.6333251209e-05)
      (28429.1223, 6.6427384552e-05)
      (29440.9645, 6.4430805648e-05)
      (30452.8066, 6.0873151347e-05)
      (31464.6488, 5.6282224578e-05)
      (32476.4909, 5.1122310998e-05)
      (33488.3331, 4.5765812798e-05)
      (34500.1753, 4.0488291988e-05)
      (35512.0174, 3.5477320478e-05)
      (36523.8596, 3.0847733145e-05)
      (37535.7017, 2.6658403886e-05)
      (38547.5439, 2.2927781649e-05)
      (39559.3860, 1.9646925508e-05)
      (40571.2282, 1.6789714003e-05)
      (41583.0704, 1.4320413380e-05)
      (42594.9125, 1.2199015667e-05)
      (43606.7547, 1.0384815594e-05)
      (44618.5968, 8.8386652314e-06)
      (45630.4390, 7.5242772599e-06)
      (46642.2811, 6.4088706213e-06)
      (47654.1233, 5.4633804220e-06)
      (48665.9654, 4.6623934490e-06)
      (49677.8076, 3.9839227396e-06)
      (50689.6498, 3.4090983637e-06)
      (51701.4919, 2.9218250555e-06)
      (52713.3341, 2.5084384915e-06)
      (53725.1762, 2.1573790021e-06)
      (54737.0184, 1.8588927639e-06)
      (55748.8605, 1.6047648330e-06)
      (56760.7027, 1.3880848137e-06)
      (57772.5449, 1.2030438464e-06)
      (58784.3870, 1.0447604514e-06)
      (59796.2292, 9.0913223739e-07)
      (60808.0713, 7.9271035056e-07)
      (61819.9135, 6.9259363720e-07)
      (62831.7556, 6.0633971323e-07)
      (63843.5978, 5.3189042015e-07)
      (64855.4400, 4.6750944696e-07)
      (65867.2821, 4.1173019298e-07)
      (66879.1243, 3.6331222033e-07)
      (67890.9664, 3.2120489256e-07)
      (68902.8086, 2.8451701325e-07)
      (69914.6507, 2.5249146790e-07)
      (70926.4929, 2.2448403403e-07)
      (71938.3351, 1.9994566254e-07)
      (72950.1772, 1.7840764909e-07)
      (73962.0194, 1.5946921225e-07)
      (74973.8615, 1.4278707657e-07)
      (75985.7037, 1.2806672663e-07)
    };
    \addlegendentry{Analytical PDF}
    \addplot[color=red!80!black, thick, dashed, smooth] coordinates {
      (16287.0164, 4.3389689749e-06)
      (17298.8586, 5.8576722648e-06)
      (18310.7007, 7.8344172360e-06)
      (19322.5429, 1.0521913830e-05)
      (20334.3851, 1.4185601370e-05)
      (21346.2272, 1.8998716920e-05)
      (22358.0694, 2.4944298558e-05)
      (23369.9115, 3.1779236259e-05)
      (24381.7537, 3.9086054390e-05)
      (25393.5958, 4.6360735146e-05)
      (26405.4380, 5.3046932735e-05)
      (27417.2802, 5.8509998815e-05)
      (28429.1223, 6.2069835608e-05)
      (29440.9645, 6.3198111976e-05)
      (30452.8066, 6.1796788389e-05)
      (31464.6488, 5.8316695188e-05)
      (32476.4909, 5.3564860062e-05)
      (33488.3331, 4.8331148424e-05)
      (34500.1753, 4.3136932254e-05)
      (35512.0174, 3.8267810227e-05)
      (36523.8596, 3.3950078088e-05)
      (37535.7017, 3.0407918536e-05)
      (38547.5439, 2.7722952275e-05)
      (39559.3860, 2.5678733741e-05)
      (40571.2282, 2.3806897841e-05)
      (41583.0704, 2.1636978512e-05)
      (42594.9125, 1.8945610716e-05)
      (43606.7547, 1.5819664325e-05)
      (44618.5968, 1.2535540608e-05)
      (45630.4390, 9.3938476577e-06)
      (46642.2811, 6.6226939720e-06)
      (47654.1233, 4.3574401501e-06)
      (48665.9654, 2.6490628787e-06)
      (49677.8076, 1.4726687911e-06)
      (50689.6498, 7.4155504200e-07)
      (51701.4919, 3.3553298942e-07)
      (52713.3341, 1.3555080375e-07)
      (53725.1762, 4.8649737368e-08)
      (54737.0184, 1.5452835642e-08)
      (55748.8605, 4.3312720420e-09)
      (56760.7027, 1.0688944687e-09)
      (57772.5449, 2.3186220229e-10)
      (58784.3870, 4.4151048295e-11)
    };
    \addlegendentry{Per-subcube KDE}
    \addplot[only marks, mark=|, mark size=3pt, color=red!80!black!50] coordinates {
      (23926.5332, 0)
      (32640.7949, 0)
      (16650.9330, 0)
      (24402.6736, 0)
      (40669.7553, 0)
      (29441.9186, 0)
      (34223.5661, 0)
      (28224.6586, 0)
      (30785.0705, 0)
      (29367.3059, 0)
      (39572.1799, 0)
      (21126.6041, 0)
      (29778.0746, 0)
      (34271.5848, 0)
      (32614.6376, 0)
      (24746.5690, 0)
      (30979.4250, 0)
      (27073.5267, 0)
      (29811.0267, 0)
      (26632.9979, 0)
      (41450.1430, 0)
      (28053.1119, 0)
      (30172.5316, 0)
      (31812.0082, 0)
      (24086.6253, 0)
      (34875.8932, 0)
      (26047.5773, 0)
      (22335.2750, 0)
      (24931.0392, 0)
      (36564.0705, 0)
      (24656.2393, 0)
      (40752.8542, 0)
      (28447.7772, 0)
      (26779.4541, 0)
      (27488.8165, 0)
      (31230.9409, 0)
      (40480.6058, 0)
      (36603.1583, 0)
      (34001.7846, 0)
      (19844.5107, 0)
      (30781.6578, 0)
      (26604.2736, 0)
      (29783.3294, 0)
      (37485.4462, 0)
      (43599.0821, 0)
      (45876.5734, 0)
      (38943.5139, 0)
      (41511.9402, 0)
      (28794.4456, 0)
      (43817.0952, 0)
      (28399.7360, 0)
      (31059.3723, 0)
      (34544.6375, 0)
      (29655.4194, 0)
      (29565.8930, 0)
      (35426.0596, 0)
      (35542.0629, 0)
      (32112.1184, 0)
      (27595.1917, 0)
      (24475.6857, 0)
      (34504.2740, 0)
    };
    \addlegendentry{Per-subcube $N$}

    \draw[solid, red!80!black, thick] (axis cs:31514, 0) -- (axis cs:31514, 5.6e-05)
      node[right = 2pt, font=\footnotesize, text=red!80!black, fill=white, inner sep=1pt, yshift = 0.2cm] {Mean $\approx$ 31.5k};

    \draw[dashed, red!80!black, thick] (axis cs:31273, 0) -- (axis cs:31273, 5.85e-05)
      node[left=2pt, pos=1.02, font=\footnotesize, text=red!80!black, fill=white, inner sep=1pt, yshift=-2cm] {Mean $\approx$ 31.3k};

  \end{axis}
\end{tikzpicture}
        \caption{Case 2: per-subcube $N$ values compared with the analytical PDF. The analytical PDF is derived from fitted normal distributions; the kernel density estimation is from directly evaluating $N^{0.95} = (2\gamma - G)/\omega^p$ per subcube. The mean life calculated using this PDF is approximately $31{,}500$ cycles. The per-subcube $N$ values are indicated by tick marks on the horizontal axis; the mean life across all subcubes is approximately $31{,}300$ cycles.}
        \label{fig:direct-N-grain}
    \end{figure}

\subsection{Effects of void size on fatigue initiation life}\label{section:num-results-eff-void}

Similar to the previous study, uncertainties in void sizes due to the thermal process during printing and the microCT detections affect the fatigue initiation life. We examine the effect of the void diameter distribution on the fatigue initiation life by comparing Case 4 (void diameter range $[10,80]\ \mu$m) and Case 5 (void diameter range $[50,80]\ \mu$m), which share the same grain parameters but differ in their void diameter range as summarized in Table~\ref{table:simulation-cases}. Case 4 with a smaller minimum void diameter allows for many small voids, while Case 5 with a larger minimum void diameter has fewer but larger voids for the same total void volume~fraction.

\subsubsection{Fitted distributions of $G$, $\gamma$, $\omega^p$}

We summarize the Akaike Information Criterion and Bayesian Information Criterion values calculated using eq.~\eqref{eq:AIC} and eq.~\eqref{eq:BIC} for all distribution fits in Tables~\ref{table:fit-summary-void10} and~\ref{table:fit-summary-void50}. For Case 4, the best BIC fit for $G$ is lognormal, and $\omega^p$ is best described by a lognormal distribution in both Cases 4 and 5. For $\gamma$, the normal distribution provides the best fit in both cases. As in Section~\ref{section:num-results-eff-grain}, we use normal fits for all three quantities for the analytical PDF evaluation, noting that the normal distribution remains a reasonable approximation despite not always being the best BIC choice.

\begin{table}[htbp]
    \centering
    \caption{Case 4: comparison of probability distribution fits for $G$, $\gamma$, and $\omega^p$. Bold values indicate the best (lowest) BIC.}
    \label{table:fit-summary-void10}
    {\footnotesize
    \begin{tabular}{l l l c c}
        \textbf{Quantity} & \textbf{Distribution} & \textbf{Parameters} & \textbf{AIC} & \textbf{BIC} \\
        \hline
        \multirow{3}{*}{$G$}
        & GEV ($\xi=-0.202$) & $\mu=1.45$, $\sigma=0.196$ & 14.83 & \textbf{21.30} \\
        & Normal & $\mu_N=1.61$, $\sigma_N=0.297$ & 30.34 & 34.66 \\
        & Lognormal & $s=0.176$, $\theta=0$, $m=1.58$ & 23.70 & 30.17 \\
        \hline
        \multirow{3}{*}{$\gamma$}
        & GEV ($\xi=-0.210$) & $\mu=5.06$, $\sigma=0.0571$ & $-173.63$ & $-167.16$ \\
        & Normal & $\mu_N=5.08$, $\sigma_N=0.0600$ & $-174.51$ & $\mathbf{-170.19}$ \\
        & Lognormal & $s=0.0118$, $\theta=0$, $m=5.08$ & $-172.67$ & $-166.19$ \\
        \hline
        \multirow{3}{*}{$\omega^p$}
        & GEV ($\xi=0.781$) & $\mu=3.36{\times}10^{-4}$, $\sigma=5.08{\times}10^{-5}$ & $-1332.82$ & $-1326.39$ \\
        & Normal & $\mu_N=3.67{\times}10^{-4}$, $\sigma_N=4.94{\times}10^{-5}$ & $-1356.72$ & $-1352.43$ \\
        & Lognormal & $s=0.131$, $\theta=0$, $m=3.64{\times}10^{-4}$ & $-1359.23$ & $\mathbf{-1352.80}$ \\
    \end{tabular}
    }
\end{table}

\begin{table}[htbp]
    \centering
    \caption{Case 5: comparison of probability distribution fits for $G$, $\gamma$, and $\omega^p$. Bold values indicate the best (lowest) BIC.}
    \label{table:fit-summary-void50}
    {\footnotesize
    \begin{tabular}{l l l c c}
        \textbf{Quantity} & \textbf{Distribution} & \textbf{Parameters} & \textbf{AIC} & \textbf{BIC} \\
        \hline
        \multirow{3}{*}{$G$}
        & GEV ($\xi=0.772$) & $\mu=2.14$, $\sigma=0.144$ & $-97.03$ & $\mathbf{-90.56}$ \\
        & Normal & $\mu_N=2.16$, $\sigma_N=0.134$ & $-71.71$ & $-67.40$ \\
        & Lognormal & $s=0.0647$, $\theta=0$, $m=2.15$ & $-64.64$ & $-58.16$ \\
        \hline
        \multirow{3}{*}{$\gamma$}
        & GEV ($\xi=0.500$) & $\mu=5.04$, $\sigma=0.0789$ & $-155.29$ & $-148.81$ \\
        & Normal & $\mu_N=5.06$, $\sigma_N=0.0700$ & $-154.71$ & $\mathbf{-150.39}$ \\
        & Lognormal & $s=0.0139$, $\theta=0$, $m=5.06$ & $-152.36$ & $-145.89$ \\
        \hline
        \multirow{3}{*}{$\omega^p$}
        & GEV ($\xi=0.781$) & $\mu=3.72{\times}10^{-4}$, $\sigma=5.58{\times}10^{-5}$ & $-1342.21$ & $-1335.74$ \\
        & Normal & $\mu_N=4.07{\times}10^{-4}$, $\sigma_N=5.65{\times}10^{-5}$ & $-1361.18$ & $-1356.86$ \\
        & Lognormal & $s=0.133$, $\theta=0$, $m=4.03{\times}10^{-4}$ & $-1365.28$ & $\mathbf{-1358.81}$ \\
    \end{tabular}
    }
\end{table}

\subsubsection{Fatigue initiation life comparison}

We compare the analytical PDFs of fatigue initiation life for Case 4 and Case 5 in Figure~\ref{fig:fatigue-life-void-comparison}. The microstructure in Case 4 contains many small voids and yields a longer mean fatigue initiation life ($N \approx 40{,}400$ cycles) compared to Case 5 ($N \approx 33{,}600$ cycles). The smaller voids in Case 4 produce lower stress concentrations, resulting in lower $G$ values. This suggests that for a fixed total void volume fraction, having many small voids is more favorable for fatigue life than having fewer, larger voids.

\begin{figure}[htbp]
    \centering
    \begin{tikzpicture}
  \begin{axis}[
    width=0.85\textwidth,
    height=6cm,
    axis lines=left,
    xlabel={Fatigue initiation life $N$ (cycles)},
    ylabel={Probability density $p_N(n)$},
    ytick scale label code/.code={$\times 10^{-4}$},
    xmin=15000, xmax=65000,
    ymin=0, ymax=10e-05,
    scaled x ticks=real:1000,
    xtick scale label code/.code={$\times 10^3$},
    grid=both,
    grid style={line width=.1pt, draw=gray!10},
    major grid style={line width=.2pt, draw=gray!20},
    axis line style={lightgray},
    tick label style={font=\footnotesize},
    label style={font=\small},
    every tick/.style={lightgray},
    legend style={at={(0.99,0.99)}, anchor=north east, font=\footnotesize, draw=none, fill=none}
  ]
    \addplot[color=orange!80!black, thick, smooth] coordinates {
      (23406.4098, 6.3373085131e-08)
      (24248.2528, 1.9275775890e-07)
      (25090.0958, 5.0533873105e-07)
      (25931.9388, 1.1629716775e-06)
      (26773.7818, 2.3868891610e-06)
      (27615.6248, 4.4290405640e-06)
      (28457.4678, 7.5189223274e-06)
      (29299.3108, 1.1799317368e-05)
      (30141.1538, 1.7271490201e-05)
      (30982.9968, 2.3768481783e-05)
      (31824.8398, 3.0965492010e-05)
      (32666.6828, 3.8423998080e-05)
      (33508.5258, 4.5656656200e-05)
      (34350.3688, 5.2196219064e-05)
      (35192.2118, 5.7653639801e-05)
      (36034.0548, 6.1756089951e-05)
      (36875.8978, 6.4362096532e-05)
      (37717.7408, 6.5456318671e-05)
      (38559.5838, 6.5129640138e-05)
      (39401.4268, 6.3551225352e-05)
      (40243.2698, 6.0938530937e-05)
      (41085.1128, 5.7529722554e-05)
      (41926.9558, 5.3561165750e-05)
      (42768.7988, 4.9251074315e-05)
      (43610.6418, 4.4789222023e-05)
      (44452.4848, 4.0331895437e-05)
      (45294.3278, 3.6000925696e-05)
      (46136.1708, 3.1885581841e-05)
      (46978.0138, 2.8046229469e-05)
      (47819.8568, 2.4518866891e-05)
      (48661.6998, 2.1319882364e-05)
      (49503.5428, 1.8450591027e-05)
      (50345.3858, 1.5901288892e-05)
      (51187.2288, 1.3654697503e-05)
      (52029.0718, 1.1688768703e-05)
      (52870.9148, 9.9788804997e-06)
      (53712.7578, 8.4994900266e-06)
      (54554.6008, 7.2253253799e-06)
      (55396.4438, 6.1322010237e-06)
      (56238.2868, 5.1975364891e-06)
      (57080.1298, 4.4006488773e-06)
      (57921.9728, 3.7228787374e-06)
      (58763.8158, 3.1475978343e-06)
      (59605.6589, 2.6601370820e-06)
      (60447.5019, 2.2476639706e-06)
      (61289.3449, 1.8990313161e-06)
      (62131.1879, 1.6046130885e-06)
      (62973.0309, 1.3561382900e-06)
      (63814.8739, 1.1465301876e-06)
      (64656.7169, 9.6975546104e-07)
      (65498.5599, 8.2068582771e-07)
      (66340.4029, 6.9497328882e-07)
      (67182.2459, 5.8893917370e-07)
      (68024.0889, 4.9947652371e-07)
      (68865.9319, 4.2396496885e-07)
      (69707.7749, 3.6019703686e-07)
      (70549.6179, 3.0631474346e-07)
      (71391.4609, 2.6075530326e-07)
      (72233.3039, 2.2220484488e-07)
      (73075.1469, 1.8955908907e-07)
    };
    \addlegendentry{Case 4 (void diameter range $[10,80]\ \mu$m)}

    \addplot[color=purple!80!black, thick, smooth, dashed] coordinates {
      (20170.1291, 3.4745270698e-08)
      (20834.3758, 1.2586771337e-07)
      (21498.6226, 3.8051311145e-07)
      (22162.8693, 9.8341401119e-07)
      (22827.1161, 2.2178082609e-06)
      (23491.3628, 4.4416082591e-06)
      (24155.6096, 8.0187326488e-06)
      (24819.8563, 1.3219917525e-05)
      (25484.1031, 2.0125602022e-05)
      (26148.3498, 2.8566392227e-05)
      (26812.5966, 3.8122841454e-05)
      (27476.8433, 4.8184210931e-05)
      (28141.0900, 5.8046704870e-05)
      (28805.3368, 6.7022718541e-05)
      (29469.5835, 7.4534819401e-05)
      (30133.8303, 8.0177683512e-05)
      (30798.0770, 8.3742756915e-05)
      (31462.3238, 8.5209881887e-05)
      (32126.5705, 8.4715595285e-05)
      (32790.8173, 8.2509289204e-05)
      (33455.0640, 7.8907050991e-05)
      (34119.3107, 7.4250184797e-05)
      (34783.5575, 6.8872335888e-05)
      (35447.8042, 6.3076524914e-05)
      (36112.0510, 5.7121563517e-05)
      (36776.2977, 5.1216276769e-05)
      (37440.5445, 4.5519552278e-05)
      (38104.7912, 4.0144267308e-05)
      (38769.0380, 3.5163424170e-05)
      (39433.2847, 3.0617204021e-05)
      (40097.5314, 2.6520033605e-05)
      (40761.7782, 2.2867094681e-05)
      (41426.0249, 1.9639969872e-05)
      (42090.2717, 1.6811309343e-05)
      (42754.5184, 1.4348528878e-05)
      (43418.7652, 1.2216624756e-05)
      (44083.0119, 1.0380227974e-05)
      (44747.2587, 8.8050319115e-06)
      (45411.5054, 7.4587230711e-06)
      (46075.7522, 6.3115311769e-06)
      (46739.9989, 5.3364976509e-06)
      (47404.2456, 4.5095434066e-06)
      (48068.4924, 3.8093999267e-06)
      (48732.7391, 3.2174526476e-06)
      (49396.9859, 2.7175331273e-06)
      (50061.2326, 2.2956863088e-06)
      (50725.4794, 1.9399312121e-06)
      (51389.7261, 1.6400272800e-06)
      (52053.9729, 1.3872540585e-06)
      (52718.2196, 1.1742085901e-06)
      (53382.4663, 9.9462256690e-07)
      (54046.7131, 8.4319970865e-07)
      (54710.9598, 7.1547278859e-07)
      (55375.2066, 6.0767910056e-07)
      (56039.4533, 5.1665280894e-07)
      (56703.7001, 4.3973247259e-07)
      (57367.9468, 3.7468201322e-07)
      (58032.1936, 3.1962346248e-07)
      (58696.4403, 2.7297993545e-07)
      (59360.6871, 2.3342741801e-07)
    };
    \addlegendentry{Case 5 (void diameter range $[50,80]\ \mu$m)}

    \draw[solid, orange!80!black, thick] (axis cs:40433, 0) -- (axis cs:40433, 6.05e-05)
      node[above right, font=\small, text=orange!80!black] {Mean $\approx$ 40.4k};
    \draw[dashed, purple!80!black, thick] (axis cs:33604, 0) -- (axis cs:33604, 7.85e-05)
      node[above right, font=\small, text=purple!80!black, yshift=-0.35cm, xshift = 0.1cm] {Mean $\approx$ 33.6k};

  \end{axis}
\end{tikzpicture}
    \caption{Comparison of the analytical PDF of the fatigue initiation life $N$ for Case 4 (void diameter range $[10,80]\ \mu$m) and Case 5 (void diameter range $[50,80]\ \mu$m) listed in Table~\ref{table:simulation-cases}. The correlation coefficients $\rho_{G,\gamma} = 0.8$, $\rho_{G,\omega^p} = 0.7$, $\rho_{\gamma,\omega^p} = 0.6$ are used for both cases. The mean fatigue life calculated using each PDF is about $40{,}400$ cycles for Case 4, and $33{,}600$ cycles for Case 5.}
    \label{fig:fatigue-life-void-comparison}
\end{figure}

Using the same approach described in Section~\ref{section:num-results-eff-grain}, we also obtain the KDE of the per-subcube $N$ values for Case 4, whose kernel bandwidth is approximately $2{,}700$ cycles. We compare KDE alongside the analytical PDFs and plot the comparison of Case 4 results in Figure~\ref{fig:direct-N-void}. The agreement between the KDE of per-subcube $N$ values and the analytical PDF is good in both cases, confirming the choice of the normal distribution model and the validity of the distribution-fitting approach for different void~configurations.

\begin{figure}[htbp]
    \centering
    \begin{tikzpicture}
  \begin{axis}[
    width=0.95\textwidth,
    height=6cm,
    axis lines=left,
    xlabel={Fatigue initiation life $N$ (cycles)},
    ylabel={Probability density $p_N(n)$},
    ytick scale label code/.code={$\times 10^{-4}$},
    xmin=15000, xmax=65000,
    ymin=0, ymax=10e-05,
    scaled x ticks=real:1000,
    xtick scale label code/.code={$\times 10^3$},
    grid=both,
    grid style={line width=.1pt, draw=gray!10},
    major grid style={line width=.2pt, draw=gray!20},
    axis line style={lightgray},
    tick label style={font=\footnotesize},
    label style={font=\small},
    every tick/.style={lightgray},
    legend style={at={(0.95,0.95)}, anchor=north east, font=\footnotesize, draw=none, fill=none}
  ]
    \addplot[color=orange!80!black, thick, smooth] coordinates {
      (23406.4098, 6.3373085131e-08)
      (24248.2528, 1.9275775890e-07)
      (25090.0958, 5.0533873105e-07)
      (25931.9388, 1.1629716775e-06)
      (26773.7818, 2.3868891610e-06)
      (27615.6248, 4.4290405640e-06)
      (28457.4678, 7.5189223274e-06)
      (29299.3108, 1.1799317368e-05)
      (30141.1538, 1.7271490201e-05)
      (30982.9968, 2.3768481783e-05)
      (31824.8398, 3.0965492010e-05)
      (32666.6828, 3.8423998080e-05)
      (33508.5258, 4.5656656200e-05)
      (34350.3688, 5.2196219064e-05)
      (35192.2118, 5.7653639801e-05)
      (36034.0548, 6.1756089951e-05)
      (36875.8978, 6.4362096532e-05)
      (37717.7408, 6.5456318671e-05)
      (38559.5838, 6.5129640138e-05)
      (39401.4268, 6.3551225352e-05)
      (40243.2698, 6.0938530937e-05)
      (41085.1128, 5.7529722554e-05)
      (41926.9558, 5.3561165750e-05)
      (42768.7988, 4.9251074315e-05)
      (43610.6418, 4.4789222023e-05)
      (44452.4848, 4.0331895437e-05)
      (45294.3278, 3.6000925696e-05)
      (46136.1708, 3.1885581841e-05)
      (46978.0138, 2.8046229469e-05)
      (47819.8568, 2.4518866891e-05)
      (48661.6998, 2.1319882364e-05)
      (49503.5428, 1.8450591027e-05)
      (50345.3858, 1.5901288892e-05)
      (51187.2288, 1.3654697503e-05)
      (52029.0718, 1.1688768703e-05)
      (52870.9148, 9.9788804997e-06)
      (53712.7578, 8.4994900266e-06)
      (54554.6008, 7.2253253799e-06)
      (55396.4438, 6.1322010237e-06)
      (56238.2868, 5.1975364891e-06)
      (57080.1298, 4.4006488773e-06)
      (57921.9728, 3.7228787374e-06)
      (58763.8158, 3.1475978343e-06)
      (59605.6589, 2.6601370820e-06)
      (60447.5019, 2.2476639706e-06)
      (61289.3449, 1.8990313161e-06)
      (62131.1879, 1.6046130885e-06)
      (62973.0309, 1.3561382900e-06)
      (63814.8739, 1.1465301876e-06)
      (64656.7169, 9.6975546104e-07)
      (65498.5599, 8.2068582771e-07)
      (66340.4029, 6.9497328882e-07)
      (67182.2459, 5.8893917370e-07)
      (68024.0889, 4.9947652371e-07)
      (68865.9319, 4.2396496885e-07)
      (69707.7749, 3.6019703686e-07)
      (70549.6179, 3.0631474346e-07)
      (71391.4609, 2.6075530326e-07)
      (72233.3039, 2.2220484488e-07)
      (73075.1469, 1.8955908907e-07)
    };
    \addlegendentry{Analytical PDF}
    \addplot[color=orange!80!black, thick, dashed, smooth] coordinates {
      (23406.4098, 1.5804967752e-06)
      (24248.2528, 2.7639864876e-06)
      (25090.0958, 4.4850903640e-06)
      (25931.9388, 6.7818627761e-06)
      (26773.7818, 9.6075794857e-06)
      (27615.6248, 1.2837516917e-05)
      (28457.4678, 1.6308168895e-05)
      (29299.3108, 1.9867880156e-05)
      (30141.1538, 2.3405547325e-05)
      (30982.9968, 2.6837078463e-05)
      (31824.8398, 3.0064187473e-05)
      (32666.6828, 3.2951351678e-05)
      (33508.5258, 3.5364249310e-05)
      (34350.3688, 3.7268682720e-05)
      (35192.2118, 3.8829706057e-05)
      (36034.0548, 4.0423026067e-05)
      (36875.8978, 4.2507071951e-05)
      (37717.7408, 4.5393249199e-05)
      (38559.5838, 4.9037001866e-05)
      (39401.4268, 5.2984813284e-05)
      (40243.2698, 5.6526520790e-05)
      (41085.1128, 5.8969707151e-05)
      (41926.9558, 5.9873759283e-05)
      (42768.7988, 5.9121996912e-05)
      (43610.6418, 5.6838127293e-05)
      (44452.4848, 5.3258814185e-05)
      (45294.3278, 4.8671103377e-05)
      (46136.1708, 4.3428161088e-05)
      (46978.0138, 3.7967286056e-05)
      (47819.8568, 3.2752771362e-05)
      (48661.6998, 2.8145534275e-05)
      (49503.5428, 2.4281365000e-05)
      (50345.3858, 2.1047504663e-05)
      (51187.2288, 1.8180012242e-05)
      (52029.0718, 1.5421826178e-05)
      (52870.9148, 1.2648191242e-05)
      (53712.7578, 9.8993980722e-06)
      (54554.6008, 7.3255082028e-06)
      (55396.4438, 5.0940677712e-06)
      (56238.2868, 3.3155714160e-06)
      (57080.1298, 2.0140813140e-06)
      (57921.9728, 1.1391204979e-06)
      (58763.8158, 5.9842902570e-07)
      (59605.6589, 2.9130698117e-07)
      (60447.5019, 1.3107086010e-07)
      (61289.3449, 5.4377116426e-08)
      (62131.1879, 2.0752904142e-08)
      (62973.0309, 7.2708759629e-09)
      (63814.8739, 2.3342437735e-09)
      (64656.7169, 6.8561956882e-10)
      (65498.5599, 1.8400861734e-10)
      (66340.4029, 4.5076991646e-11)
      (67182.2459, 1.0070865188e-11)
      (68024.0889, 2.0506051474e-12)
      (68865.9319, 3.8033870118e-13)
      (69707.7749, 6.4232025734e-14)
      (70549.6179, 9.8737985184e-15)
      (71391.4609, 1.3812063251e-15)
      (72233.3039, 1.7578794853e-16)
      (73075.1469, 2.0352116337e-17)
    };
    \addlegendentry{Per-subcube KDE}
    \addplot[only marks, mark=|, mark size=3pt, color=orange!80!black!50] coordinates {
      (30840.7609, 0)
      (41536.2695, 0)
      (51394.4641, 0)
      (41510.9122, 0)
      (43378.2087, 0)
      (47812.4488, 0)
      (45023.1666, 0)
      (44594.3958, 0)
      (44367.4175, 0)
      (29492.9451, 0)
      (45165.8496, 0)
      (32575.7530, 0)
      (46031.5645, 0)
      (35731.6277, 0)
      (39934.2216, 0)
      (40038.6762, 0)
      (47766.1434, 0)
      (45903.9077, 0)
      (34738.9809, 0)
      (42656.0980, 0)
      (30217.8052, 0)
      (50385.1789, 0)
      (41121.8231, 0)
      (43900.6723, 0)
      (33869.5721, 0)
      (43007.9165, 0)
      (35009.0349, 0)
      (40419.4479, 0)
      (37274.7930, 0)
      (30019.4505, 0)
      (35310.7636, 0)
      (44959.5385, 0)
      (39370.9934, 0)
      (51503.5778, 0)
      (46584.6628, 0)
      (53601.9423, 0)
      (32437.0823, 0)
      (38582.6123, 0)
      (44121.3389, 0)
      (46029.4500, 0)
      (36056.5847, 0)
      (36343.9321, 0)
      (50072.4229, 0)
      (33886.8248, 0)
      (41378.4549, 0)
      (32728.2338, 0)
      (40223.9983, 0)
      (40326.2210, 0)
      (41483.5547, 0)
      (40492.8646, 0)
      (34011.0355, 0)
      (45759.5790, 0)
      (41646.3324, 0)
      (40911.9580, 0)
      (40147.0599, 0)
      (51119.0379, 0)
      (36616.7819, 0)
      (41727.7191, 0)
      (28518.0826, 0)
      (44471.9583, 0)
      (33560.6255, 0)
      (27703.8273, 0)
      (39449.2498, 0)
    };
    \addlegendentry{Per-subcube $N$}

    \draw[solid, orange!80!black, thick] (axis cs:40483, 0) -- (axis cs:40483, 6.0e-05)
      node[left=1pt, pos=1.02, font=\footnotesize, text=orange!80!black, fill=white, inner sep=1pt, yshift = 0.45 cm] {Mean $\approx$ 40.4k};

    \draw[dashed, orange!80!black, thick] (axis cs:40406, 0) -- (axis cs:40406, 5.75e-05)
      node[right=1pt, pos=1.02, font=\footnotesize, text=orange!80!black, fill=white, inner sep=1pt, yshift = -2.2 cm] {Mean $\approx$ 40.4k}; 

  \end{axis}
\end{tikzpicture}
    \caption{Case 4: per-subcube $N$ values compared with the analytical PDF. The analytical PDF is derived from fitted normal distributions; the KDE is from directly evaluating $N^{0.95} = (2\gamma - G)/\omega^p$ per subcube. The mean life calculated using this PDF is approximately $40{,}400$ cycles. The per-subcube $N$ values are indicated by tick marks on the horizontal axis; the mean life across all subcubes is approximately $40{,}400$ cycles.}
    \label{fig:direct-N-void}
\end{figure}

\section{Conclusion and future work} \label{sec: con}

We have developed and demonstrated an integrated framework for quantifying the uncertainty of fatigue initiation life in laser powder bed fusion 3D-printed parts. By processing microstructure data from electron backscatter diffraction and micro-computed tomography scan data of a printed specimen using three physical models, we derive a closed-form, analytical PDF for fatigue initiation life. The models we use and the fatigue-affecting quantities we obtain are the linear elastic finite element analysis ELAS3D for the elastic energy release rate $G$, the graph-theoretic crack path simulation MicroFract3D for the surface energy $\gamma$, and the crystal plasticity finite element analysis PRISMS-Plasticity for the fatigue indicator parameter $\omega^p$. We speed up the ELAS3D simulation by implementing a two-fidelity scheme with a suitable preconditioner and warm-start initial guess. We integrate our two-phase material modeling into the MicroFract3D solver and consider proper material boundary interactions. We use the same microstructures for all three models to obtain the fatigue-affecting quantities, so we consider correlations among these quantities and propagate uncertainties from the as-built microstructure to the fatigue initiation life. 

Using the propagation framework, we show that the resulting probability density function of the fatigue initiation life can improve the accuracy of the decision on scheduling inspection intervals and save operational costs. Through careful design of numerical experiments, we further investigate the sensitivity of the fatigue initiation life distribution to microstructure parameters through parametric studies on grain size and void size. Increasing the mean grain diameter from 60 to 80~$\mu$m leads to a modest decrease in the mean fatigue life. For voids, the minimum void diameter impacts fatigue life. For a fixed total void volume fraction, configurations with many small voids yield longer fatigue lives than those with fewer, larger voids. In all cases, we validate the analytical PDF against direct per-subcube evaluations of the fatigue life formula.

We identify several directions to extend this work: (1) Using PETSc solvers for ELAS3D to further accelerate elastic stress calculations and developing surrogate models for the crystal plasticity finite element simulations to bypass the high computational costs of the PRISMS-Plasticity solver, (2) calibrating the cross-correlation coefficients directly from experimental fatigue life testing of failed prints, and (3) integrating microstructure prediction models with the calibrated fatigue life distribution to enable the prediction of fatigue reliability directly from print parameters.

\section*{CRediT authorship contribution statement}
\textbf{Yulin Guo}: Conceptualization, Methodology, Data Curation, Software, Formal analysis, Investigation, Software, Visualization, Writing - original draft, Writing - review \& editing; \textbf{Boris Kramer}: Conceptualization, Formal analysis, Investigation, Writing - review and editing, Funding acquisition, Project administration, Supervision; \textbf{Veera Sundararaghavan}: Conceptualization, Writing - review \& editing, Supervision,
Project administration, Funding acquisition. 

\section*{Acknowledgment}

We thank our collaborators for providing EBSD and microCT figures. Figures~\ref{fig:printed-samples},~\ref{fig:EBSD-results-a}~and~\ref{fig:EBSD-results-b} are produced by Mohsen Taheri Andani, Texas A\&M University; Figure~\ref{fig:microCT-defects} is produced using data provided by Shuai Shao, Auburn University.

Funding information: Y. Guo, B. Kramer, and V. Sundararaghavan were financially supported by the Defense Advanced Research Projects Agency (DARPA) Cooperative Agreement No. HR0011-25-2-0009, "Predictive Real-time Intelligence for Metallic Endurance (PRIME)."

\section*{Conflict of interest}
All authors declare that they have no known competing financial interests or personal relationships that could have appeared to influence the work reported in this paper.

\section*{Data availability and replication of results}
Input files for all simulations used to generate the results are available to download from the Github repository: https://github.com/yulin-g/UQ-fatigue-life-PBF-AM. The results presented in this manuscript can be replicated using the provided material in the Github repository.

\section*{Tool and computational resource disclosure}
During the preparation of this work, Y. Guo used ChatGPT, Claude, and Gemini models to assist with improving the clarity of existing sentences, troubleshooting programming-related issues, and converting matplotlib figures to TikZ formats. All suggestions generated by these tools were independently verified by Y. Guo. The authours reviewed and edited the content as needed as take full responsibility for the content of the published article.

\bibliography{references}
\bibliographystyle{acm}

\end{document}